\newcommand{\mc}{\mathcal}
\newcommand{\bs}{\boldsymbol}
\newcommand{\req}{\mathcal{R}}
\newcommand{\reqs}[1]{\req^{\mathcal{#1}}}
\newcommand{\devs}{\mathcal{D}}
\newcommand{\res}{\rho}
\newcommand{\clusters}{\mathcal{K}}
\newcommand{\sm}[1]{\setminus\{{#1}\}}
\newcommand{\reffig}[1]{Fig.~\ref{#1}}   % mette la reference alle figure con Fig. in automatico
\newcommand{\reftable}[1]{Table~\ref{#1}}   % mette la reference alle figure con Table in automatico
\DeclareMathOperator*{\argmax}{argmax} % thin space, limits underneath in displays
\theoremstyle{plain}
\newtheorem{theorem}{Theorem}
\newtheorem*{corollary*}{Corollary}
\theoremstyle{definition}
\newtheorem{definition}{Definition}
\newtheorem*{definition*}{Definition}
\theoremstyle{remark}
\newtheorem*{remark*}{Remark}
\title{\textit{Sl-EDGE:} Network Slicing at the Edge}
\author{Salvatore D'Oro$^\dagger$, Leonardo Bonati$^\dagger$, Francesco Restuccia$^\dagger$, Michele Polese$^\dagger$, Michele Zorzi$^\ddagger$ and Tommaso Melodia$^\dagger$}
\affiliation{%
  \institution{$^\dagger$Institute for the Wireless Internet of Things, Northeastern University, Boston, MA, USA \\
  $^\ddagger$Department of Information Engineering, University of Padova, Italy}
}
\begin{abstract}
% The ever increasing demand for high data-rate and low-latency network services is revolutionizing the architecture of next-generation 5G networks. To meet the above requirements, Network Operators (NOs) are moving network functionalities and services to the edge of the network. 

Network slicing of multi-access edge computing (MEC) resources is expected to be a \textit{pivotal} technology to the success of 5G networks and beyond. The key challenge that sets MEC slicing apart from traditional resource allocation problems is that edge nodes depend on \textit{tightly-intertwined} and \emph{strictly-constrained} networking, computation and storage resources. Therefore, instantiating MEC slices without incurring in resource over-provisioning  is hardly addressable with existing slicing algorithms. The main innovation of this paper is \textit{Sl-EDGE}, a unified MEC slicing framework that allows network operators to instantiate heterogeneous slice services (\textit{e.g.,} video streaming, caching, 5G network access) on edge devices. We first describe the architecture and operations of \textit{Sl-EDGE}, and then show that the problem of optimally instantiating joint network-MEC slices is NP-hard. Thus, we propose near-optimal algorithms that leverage key similarities among edge nodes and resource virtualization to instantiate heterogeneous slices $7.5$x faster and within $0.25$ of the optimum. We first assess the performance of our algorithms through extensive numerical analysis, and show that \textit{Sl-EDGE} instantiates slices 6x more efficiently then state-of-the-art MEC slicing algorithms. Furthermore, experimental results on a 24-radio testbed with 9 smartphones demonstrate that \textit{Sl-EDGE} provides at once highly-efficient slicing of joint LTE connectivity, video streaming over WiFi, and \textit{ffmpeg} video transcoding. \vspace{-0.2cm}
% Let's describe a bit the experiments here. Let's excite the reader..\vspace{-0.2cm}
\end{abstract}
\keywords{Network Slicing, 5G, Multi-access Edge Computing (MEC), Radio Access Network (RAN).}
\begin{document}

\maketitle
\pagestyle{plain}
% \pagenumbering{gobble}

% \begin{IEEEkeywords}
% Network Slicing, 5G, Multi-access Edge Computing (MEC), Radio Access Network (RAN).
% \end{IEEEkeywords}

\section{Introduction}

It is now clear that advanced softwarization and virtualization paradigms such as network slicing will be the \textit{cornerstone} of 5G networks~\cite{d2020slicing} and the Internet of Things \cite{bizanis2016sdn}. Indeed, by sharing a common underlying physical infrastructure, network operators~(NOs) can dynamically deploy multiple ``slices'' tailored for specific services (\textit{e.g.,} video streaming, augmented reality), as well as requirements (\textit{e.g.,} low latency, high throughput, low jitter)~\cite{5gkpi}, avoiding the static---thus, inefficient---network deployments that have plagued traditional hardware-based cellular networks. To further decrease latency, increase throughput, and provide improved services to their subscribers, NOs have recently started integrating \textit{multi-access edge computing}~(MEC) technologies~\cite{taleb2017multi}, which are expected to become essential to reach the sub-1\:ms latency requirements of 5G. MEC will be so quintessential that the European Telecommunications Standards Institute~(ETSI) has identified MEC as ``\textit{one of the key pillars for meeting the demanding Key Performance Indicators~(KPIs) of 5G}'' and ``\textit{[as playing] an essential role in the transformation of the telecommunications business}''~\cite{mecjoint}.

Despite the clear advantages of network slicing and MEC, the truth of the matter is that {we cannot have one without the other}. Indeed, slicing networking resources only, \emph{e.g.,} spectrum and resource blocks~(RBs) \emph{cannot suffice to satisfy the stringent timing and performance requirements of 5G networks}.  Real-time wireless video streaming, for example, requires at the same time (i) networking resources (\textit{e.g.,} RBs) to broadcast the video, (ii)~computational resources to process and transcode the video, as well as (iii) storage resources to locally cache the video. The key issue that sets MEC slicing apart from traditional slicing problems is that MEC resources are usually \textit{coupled}, meaning that slicing one resource usually leads to a \textit{performance degradation} in another type of resource.

\begin{figure}[!h]
    \centering
    \includegraphics[width=\columnwidth]{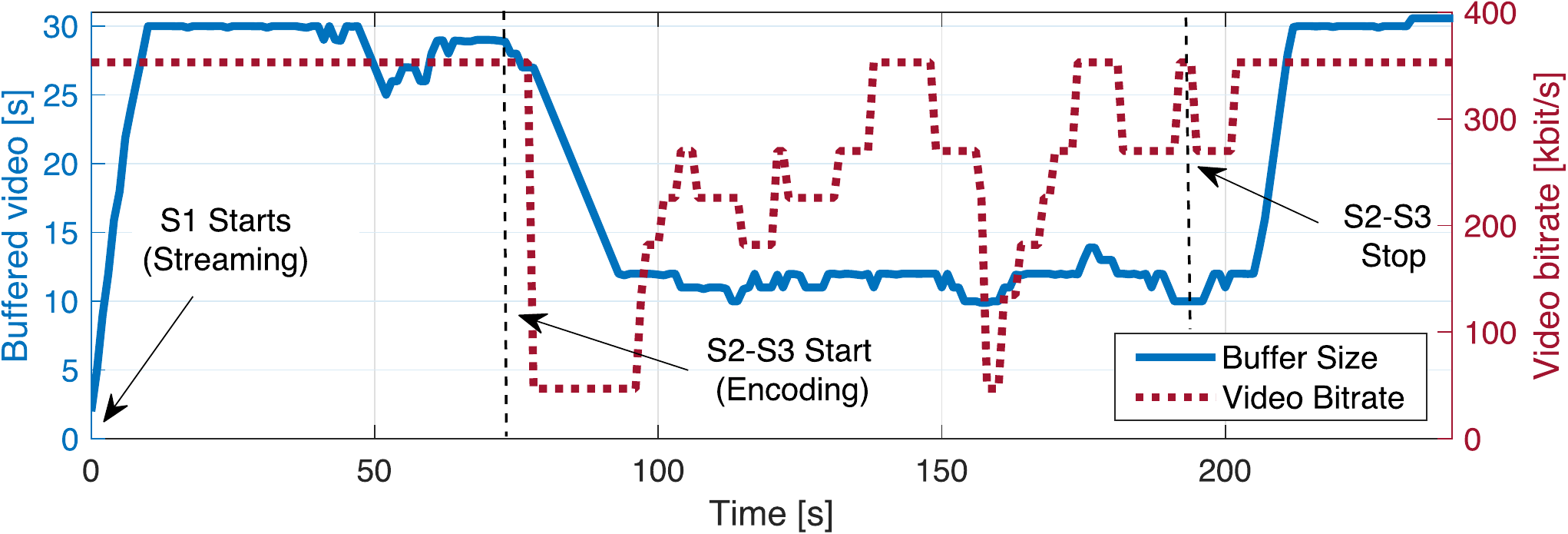}
    % \vspace{-0.7cm}
    \caption{Effect of coupling on joint networking-MEC slicing.}
    \label{fig:coupling}
 \end{figure}

We further verify this critical issue in \reffig{fig:coupling}, where we show an experiment (testbed described in Section \ref{sec:experimental}) where we instantiate 1 slice for video streaming (S1) and 2 slices for video transcoding (S2 and S3). S1 starts at $t = 0$,  while S2 and S3 start at $t = 75$. \reffig{fig:coupling} clearly shows that as soon as S2 and S3 start, the performance of S1 plummets. This is because the computational resources allocated for S2 and S3 cause the video buffer (blue line) to drop from $\sim$30 seconds to $\sim$10 seconds, which in turn causes highly-jittered bitrate (red line). As soon as S2 and S3 end at $t = 190$, buffer size and video bitrate sharply increase and stabilize. This demonstrates that slices that require both computation and networking resources (S1, video streaming) are inevitably impacted by slices running on the same node that only require computation (S2 and S3, video transcoding). Therefore, \textit{taking into account the coupling among slices is a compelling necessity to guarantee appropriate performances when designing edge slicing algorithms}.

% Although such edgification is expected to bring tremendous performance improvements, at the same time it also poses severe challenges to the slicing of the network.
Extensive research efforts have already explored MEC algorithms for cellular networks and network slicing \cite{direct,8666109,castellano2019distributed,ndikumana2019joint}. The key limitation of prior work, however, is the assumption that network slicing and MEC are distinct problems. As we demonstrated in \reffig{fig:coupling}, this is hardly the case in practical scenarios. However, the already rich literature on network slicing---discussed in detail in Section~\ref{sec:related}---has not yet considered this fundamental aspect. This makes the joint MEC/slicing paradigm a substantially  unexplored problem. We also point out that due to the massive scale envisioned for 5G and IoT applications, centralized algorithms become prohibitive. Thus, a core issue is not only to account for resource coupling, but also to devise new slicing algorithms that \textit{provide highly efficient and  scalable slicing strategies}.\smallskip

\textbf{Novel contributions.}~The paper's key innovation is the design, analysis and experimental evaluation of the unified MEC slicing framework, called \textit{Sl-EDGE}, that allows network operators to instantiate heterogeneous slice services (\textit{e.g.,} video streaming, caching, 5G network access) on edge devices. In a nutshell, we make the following novel contributions;\smallskip

$\bullet$ We mathematically model and discuss coupling relationships among networking, storage and computation resources at each edge node (Section~\ref{sec:model}). We formulate the \textit{Edge Slicing Problem}~(ESP) as a Mixed Integer Linear Programming~(MILP) problem, and we prove that it is NP-hard (Section~\ref{sec:optimal});\smallskip

$\bullet$ We propose three novel slicing algorithms to address (ESP), each having different optimality and computational complexity. Specifically, we present (i) a centralized optimal algorithm for small network instances (Section~\ref{sec:optimal}); (ii) an approximation algorithm that leverages virtualization concepts to reduce complexity with close-to-optimal performance (Section~\ref{sec:aggreagation}), and (iii) a low-complexity algorithm where slicing decisions are made at the edge nodes with minimal overhead (Section~\ref{sec:distributed});\smallskip

$\bullet$ We extensively evaluate the performance of the three slicing algorithms through simulation, and compare \textit{Sl-EDGE} with DIRECT \cite{direct}, to the best of our knowledge the state-of-the-art slicing framework for MEC 5G applications (Section~\ref{sec:numerical}). Our results show that, by taking into account coupling among heterogeneous resources, \textit{Sl-EDGE} (i) instantiates slices 6x more efficiently then the algorithm proposed in \cite{direct}, as well as satisfying resource availability constraints;
(ii) can be implemented with a distributed approach while getting 0.25 close to the optimal solution;\smallskip

$\bullet$ We prototype and demonstrate \textit{Sl-EDGE} on a testbed composed by 24 software-defined radios. Experimental results demonstrate that \textit{Sl-EDGE} instantiates heterogeneous slices providing LTE connectivity to smartphones, video streaming over WiFi, and \textit{ffmpeg} video transcoding while achieving an instantaneous throughput of $37\:\mathrm{Mbit/s}$ over LTE links, $1.2\:\mathrm{Mbit/s}$ video streaming bitrate with an overall CPU utilization of $83\%$ (Section~\ref{sec:experimental}).

%\end{itemize}

% The remainder of this paper is organized as follows: Section \ref{sec:related} surveys the literature on the topic and highlights major differences between this work and the state-of-the-art. An overview of Sl-EDGE and its architecture is presented in Section \ref{sec:overview}. Section \ref{sec:optimal} presents the centralized optimal algorithm that Sl-EDGE uses to efficiently allocate networking and MEC slices in small-scale networks, while approximation algorithms for large-scale networks are described in Section \ref{sec:approximation}. Section \ref{sec:numerical} illustrates the performance of Sl-EDGE through numerical  simulations, and Section \ref{sec:experimental} presents a prototype implementation of Sl-EDGE on our experimental testbed. Finally, concluding remarks are given in Section \ref{sec:conclusions}.

%%%%%%%%%%%%%%%%%%%%%%%%%%%%
% Save for later if needed %
%%%%%%%%%%%%%%%%%%%%%%%%%%%%
% Here, 5G capabilities, such as network slicing and flexible deployment of UPF, help enable flexible MEC-based deployment that is needed to host such different enterprise applications \cite{mecjoint} Also, this maps exactly to the Network Slice Selection Function (NSSL). The Network Slice Selection Function (NSSF) is the function that assists in the selection of suitable network slice instances for users, and in the allocation of the necessary Access Management Functions (AMF).

\section{Related Work} \label{sec:related}

Motivated by the ever increasing interests from both NOs and standardization entities \cite{mecslicing, mecdeploy,mecjoint}, network slicing and multi-access edge computing technologies have recently become ``all the rage'' in the wireless research community~\cite{mancuso2019slicing,d2019slice,mandelli2019satisfying,garcia2018posens,rost2017network,zhang2019adaptive,foukas2017orion}. Lately, we have seen a deluge of algorithms to efficiently slice portions of the network and instantiate service-specific slices. These solutions leverage optimization~\cite{han2019utility,halabian2019distributed,sun2019hierarchical,agarwal2018joint}, game-theory~\cite{caballero2017network,jiang2017network,d2018low} and machine learning~\cite{8666109} tools. 

Concurrently, MEC has demonstrated to be an effective methodology to  significantly reduce latency. This paradigm has been successfully used to provide task offloading~\cite{misra2019detour,tran2018joint,liu2016delay,wang2019computation}, augmented reality (AR)~\cite{al2017energy,erol2018caching,tang2018multi}, low-latency video streaming~\cite{he2019cloud,yang2018multi}, and caching~\cite{hoang2018dynamic,zhang2018cooperative}, among others.
We refer the interested reader to the following surveys for a more detailed overview of network slicing and MEC~\cite{afolabi2018network,kaloxylos2018survey,taleb2017multi,mao2017survey,mach2017mobile}.

%Only recently did the rise of 5G-related applications with ultra low-latency requirements bring the two technologies together --  pushing the \textit{"edgification"} of the network to its limit, i.e., the base stations \cite{mecjoint}.

These works are extremely effective when the two technologies are considered as independent entities operating on the same infrastructure. However, as shown in \reffig{fig:coupling}, they are prone to resource over-provisioning when both technologies coexist on the same edge nodes and share the same resources. Ndikumana \textit{et al.}~\cite{ndikumana2019joint} consider the allocation of heterogeneous resources for task offloading problems in MEC ecosystems, while in~\cite{agarwal2018joint} Agarwal \textit{et al.}\ consider the problem of jointly allocating CPUs and virtual network functions for network slicing applications.
Similarly, in~\cite{8666109} Van Huynh \textit{et al.}\ account for slice networking, computation, and storage resources by designing a deep dueling neural network that determines which slices to be admitted, maximizes the network provider's reward, and avoids resource over-provisioning. Although~\cite{8666109} accounts for resource heterogeneity, the authors only focus on the slicing admission control problem, and do not consider how to partition the resources of each slice among the nodes of the network. 
A heterogeneous resource orchestration framework for edge computing ecosystems called \textit{Senate} is presented by Castellano \textit{et al.}\ in~\cite{castellano2019distributed}. Senate leverages Distributed  Orchestration  Resource  Assignment (DORA) and election algorithms to allow the instantiation of multiple virtual network functions on the same infrastructure. However, this work focuses on the wired portion of the edge network only.  

The closest work to ours is~\cite{direct}, where Liu and Han proposed a distributed slicing framework for MEC-enabled wireless networks called DIRECT. Despite being extremely successful in slicing networks with MEC resources residing in dedicated servers close to the base stations, DIRECT does not account for the case where both networking and MEC resources coexist on the same edge node.

This paper separates itself from prior work as it makes a substantial step forward toward the coexistence of network slicing and MEC technologies. Indeed, we consider the challenging case of edge nodes jointly providing wireless network access and MEC functionalities to mobile users. Furthermore, we model the intrinsic coupling among heterogeneous resources residing on edge nodes, and design \textit{Sl-EDGE}, a slicing framework that leverages such a coupling to instantiate heterogeneous slices on the same physical infrastructure. 

\section{S\lowercase{l}-EDGE at a glance} \label{sec:overview}

\textit{Sl-EDGE} is a slicing framework for MEC-enabled 5G systems. Its key advantage is that it provides a fast, flexible and efficient deployment of joint networking and MEC slices. The three-tiered architecture of \textit{Sl-EDGE} is illustrated in \reffig{fig:architecture}.

\begin{figure}[!t]
    \centering
    \setlength\belowcaptionskip{-.48cm}
    \includegraphics[width=\columnwidth]{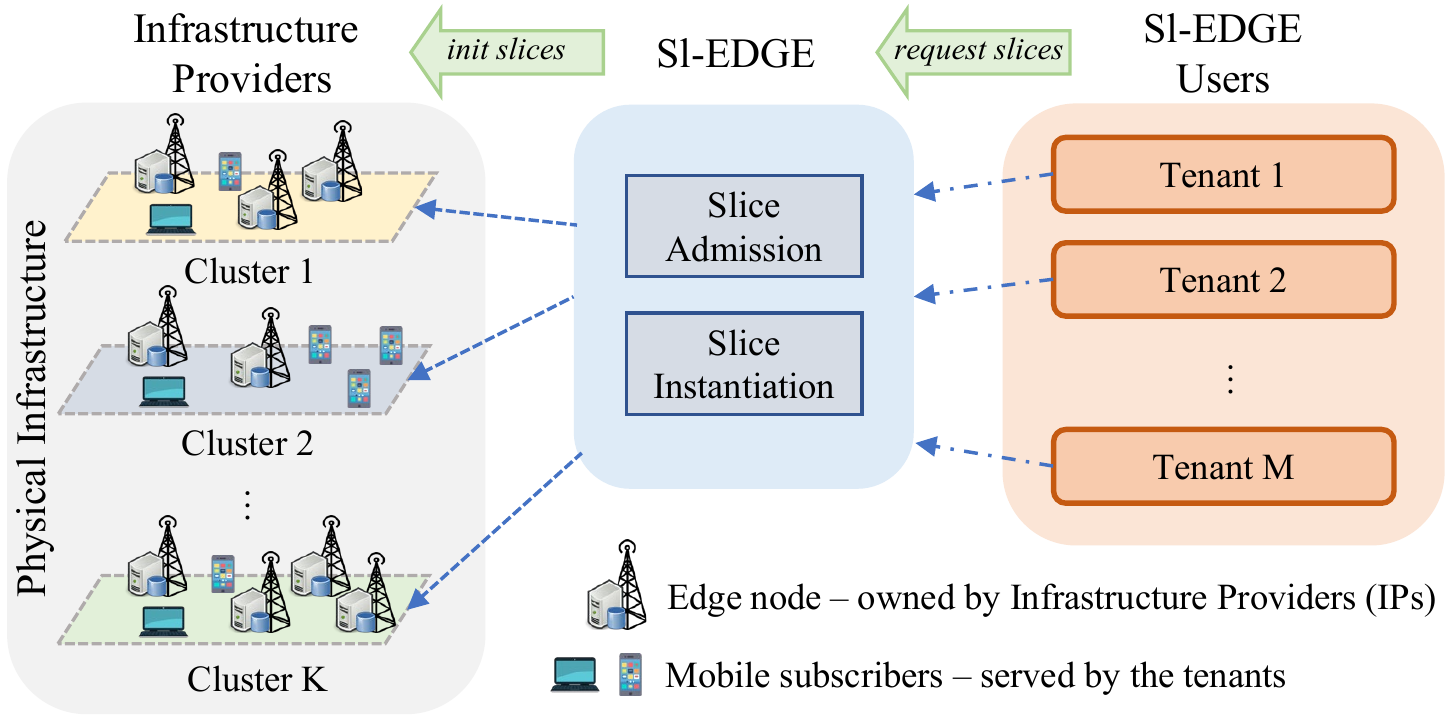}
    % \vspace{-0.3cm}
    \caption{The three-tier architecture of \textit{Sl-EDGE}.}
    \vspace{-0.17cm}
    \label{fig:architecture}
 \end{figure}

The \textit{physical infrastructure} consists of a set of MEC-enabled networking edge nodes (\textit{e.g.,} base stations, access points, IoT gateways)---referred to as \textit{MEC hosts}~\cite{mecjoint}---controlled by one or more Infrastructure Providers~(IPs). MEC hosts are located at the network edge and simultaneously provide networking, storage and computational services (\textit{e.g.},  Internet access, video content delivery, caching).

\textit{Sl-EDGE Users} are both mobile and virtual NOs and Service Providers (SPs)---referred to as the \textit{tenants}---willing to rent portions of the infrastructure to provide services to their subscribers. Tenants access \textit{Sl-EDGE} to visualize relevant information such as position of MEC hosts, which areas they cover and a list of networking and MEC services that can be instantiated on each host (\textit{e.g.,} 5G/WiFi connectivity, caching, computation). Whenever tenants need to provide these services, they submit slice requests to obtain networking, storage or computation resources. The received slice requests are collected and processed by \textit{Sl-EDGE} which (i) determines the set of requests to be accommodated by using centralized (Section~\ref{sec:optimal}) and distributed algorithms (Section~\ref{sec:approximation});
% ---whose details will be given in Section~\ref{sec:optimal} and Section~\ref{sec:approximation}, respectively;
(ii) instantiates slices by allocating the available resources to each admitted slice, and (iii) notifies to the admitted tenants the list of the resources allocated to the slice.

% \vspace{-0.1cm}
\section{System model} \label{sec:model}
% \vspace{-0.1cm}
Let $\devs$ be the set of deployed MEC-enabled networking devices, or \textit{edge nodes}.
%which for simplicity hereafter we will refer to as the \textit{edge nodes}.
Edge nodes provide both wireless connectivity and MEC services to a limited portion of the network. Therefore, they can be clustered into $K$ \textit{clusters} located in different geographical areas~\cite{ndikumana2019joint,bouet2018mobile,gudipati2013softran}.
Let $\clusters=\{1,2,\dots,K\}$ be the set of these $K$ independent clusters, and let $\devs_k$ be the set of edge nodes in cluster $k\in\clusters$. Each edge node $d\in\devs_k$ is equipped with a set of networking, storage and computational capabilities, usually measured in terms of number of RBs~\cite{d2019slice}, megabytes, and billion of instructions per second~(GIPS)~\cite{ndikumana2019joint}, respectively.
Let $z\in\mathcal{T}=\{\mc{N},\mc{S},\mc{C}\}$ represent the resource type, \textit{i.e.},  networking ($\mc{N}$), storage ($\mc{S}$), and computing ($\mc{C}$). Moreover, let $\boldsymbol{\res}_d=(\res_d^z)_{z\in\mathcal{T}}\in\mathbb{R}_{\geq0}^3$ be the set of resources available at each edge node $d$. 
% For the sake of generality we do not make any assumption on how the amount of available resources are measured. However well-established approaches to measure networking, storage and computational resources are for example resource blocks (RBs) \cite{d2019slice}, megabytes and Million of instructions per second (MIPS) \cite{ndikumana2019joint}, respectively. 
An example of the physical infrastructure and its clustered structure is shown in \reffig{fig:example_network}.

Let $\req =\{\reqs{N},\reqs{S},\reqs{C}\}$ be the set of slice requests submitted to \textit{Sl-EDGE}, with $\reqs{N},\reqs{S},\reqs{C}$ being the set for networking, storage, and computing slice requests, respectively. Each request $r\in\req^z$ of type $z$ is associated to a \textit{value} $v_r^z > 0$  used by the IP to assess the importance---or monetary value---of $r$. Also, we define the $K$-dimensional \textit{request demand array} $\boldsymbol{\tau}_r=(\tau^z_{r,k})_{k\in\clusters}$, where $\tau^z_{r,k}\geq0$ represents the amount of resources of type $z$ requested by $r$ in cluster $k$. Without loss of generality, we assume that $\sum_{k\in\clusters} \tau^z_{r,k} > 0$ for all $r\in\req$.

\begin{figure}[!t]
    \centering
    \includegraphics[width=\columnwidth]{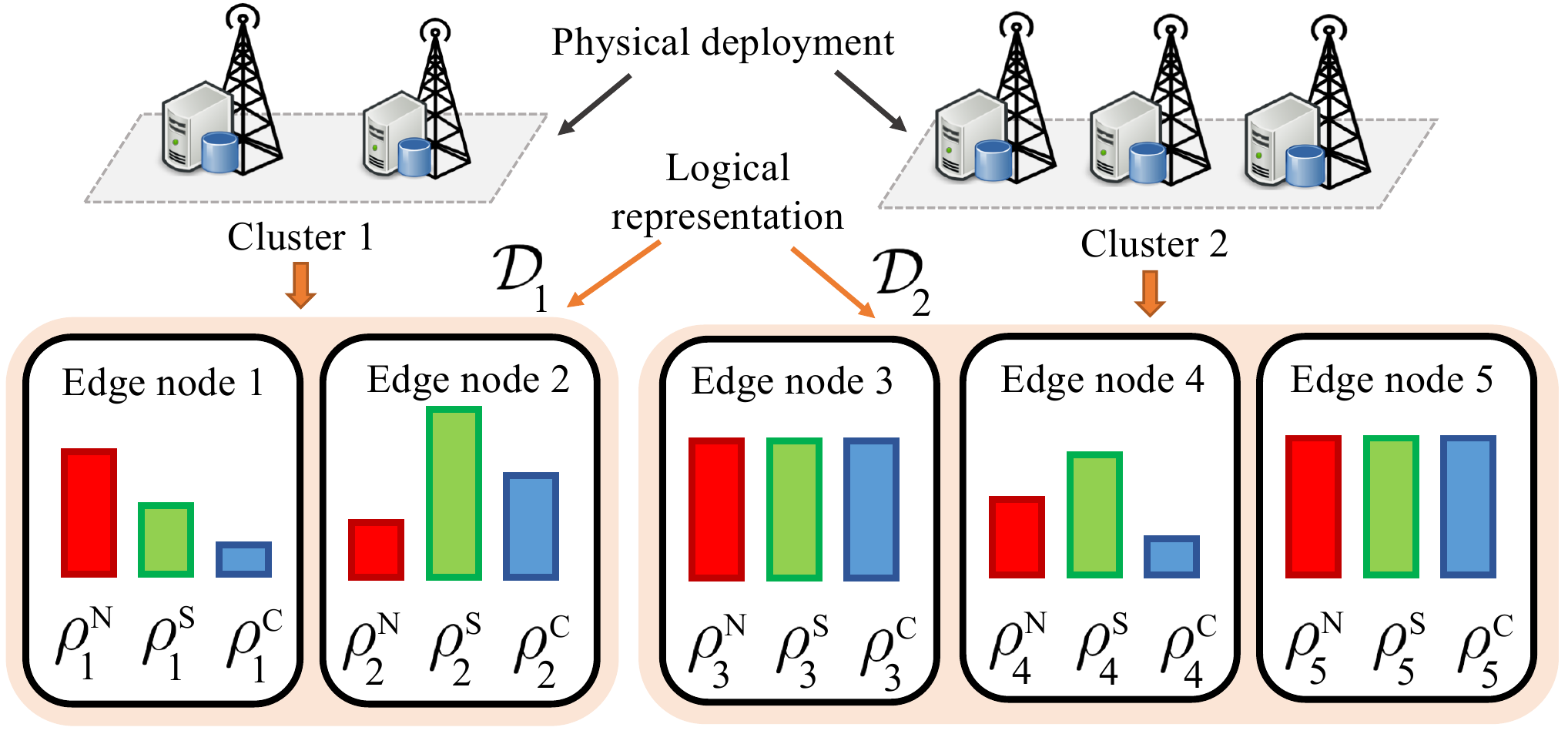}
    % \vspace{-0.3cm}
    \caption{System model example with $K=2$ clusters with edge node sets $\devs_1=\{1,2\}$ and $\devs_2=\{3,4,5\}$, respectively.}
    \label{fig:example_network}
    \vspace{-0.15cm}
 \end{figure}

\subsection{Resource coupling and collateral functions} \label{sec:model:coupling}

To successfully slice networking and MEC resources, it is paramount to understand the underlying dynamics  between resources of different nature. To this purpose, let us consider two simple but extremely effective examples.

\begin{figure}[h!]
% \vspace{-0.6cm}
\centering
   \begin{minipage}[b]{0.49\columnwidth}
    \centering
    % \vspace{-0.3cm}
    \includegraphics[width=\columnwidth]{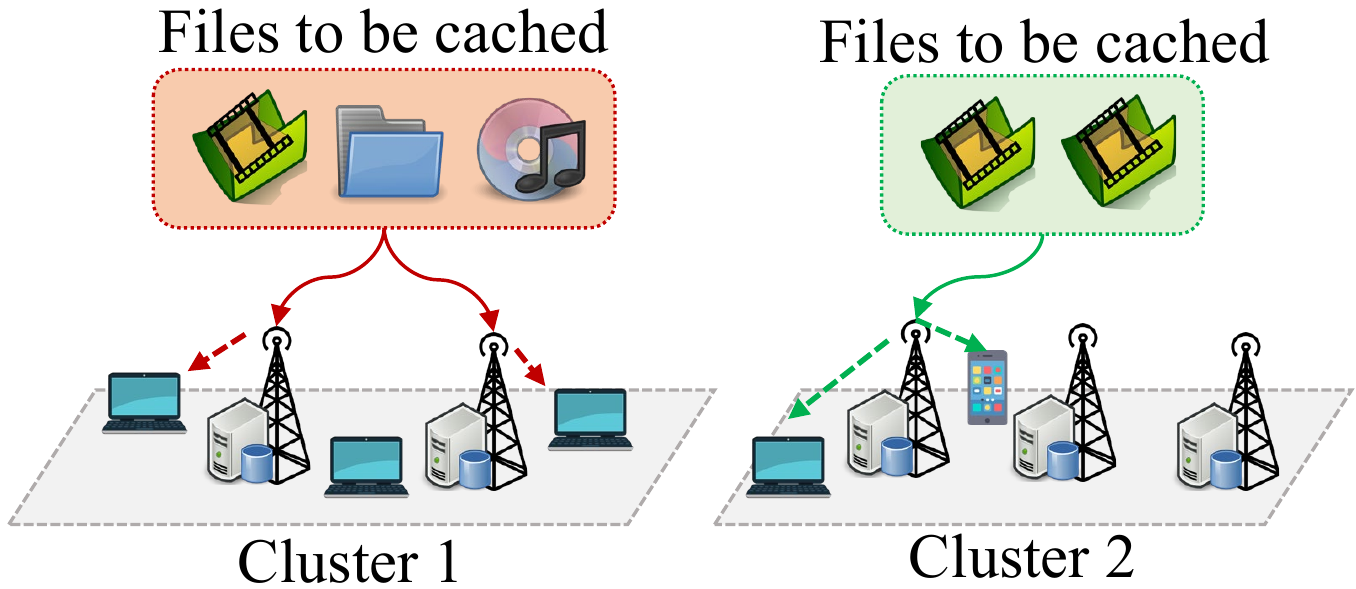}
    % \vspace{-0.5cm}
    \caption{\label{fig:coupling:caching} Content caching.}
  \end{minipage}
  \begin{minipage}[b]{0.49\columnwidth}
%   \vspace{0.2cm}
    \includegraphics[width=\columnwidth]{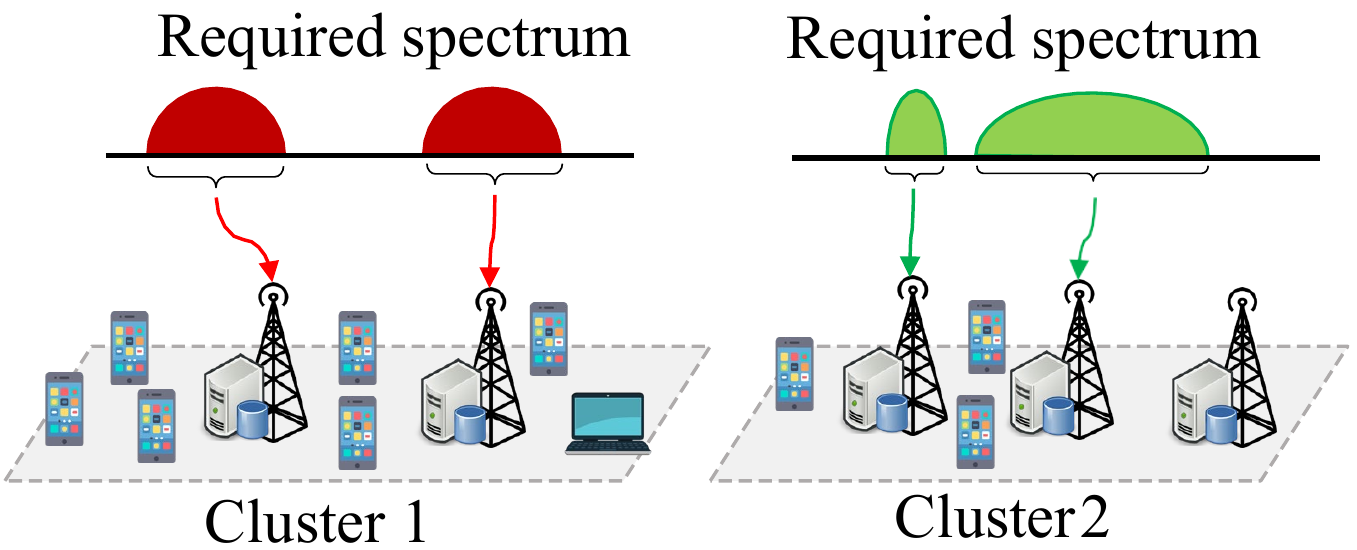}
    % \vspace{-0.5cm}
    \caption{\label{fig:coupling:network} 5G networking.}
  \end{minipage}
%   \vspace{-0.1cm}
%   \hspace{0.001\textwidth}
\end{figure}

\subsubsection{Content Caching} a tenant instantiates a storage slice (\reffig{fig:coupling:caching}) to provide caching services to its subscribers, \textit{i.e.,} $r\in\req^{\mc{S}}$, and specifies how many megabytes ($\tau_{r,k}^{\mc{S}}$) should be allocated in each cluster $k\in\clusters$.
In this case, the content to be cached should be (i) first transmitted and then (ii) processed by storing edge nodes.
% , and (ii) eventually delivered to tenant subscribers. 
Therefore, storage activities related to caching procedures not only utilize storage resources, but also require networking and computational resources.
\subsubsection{5G networking} in this example (\reffig{fig:coupling:network}) a tenant wants to provide cellular services to mobile subscribers. Hence, it submits a networking slice request $r$ of type $\mc{N}$ and specifies the clusters to be included in the slice and the amount of spectrum resources ($\tau_{r,k}^{\mc{N}}$) needed in each cluster. Edge nodes providing connectivity must (i) perform channel estimation and baseband signal processing procedures, and (ii) locally cache or buffer the data to transmit. Therefore, the allocation of resources of type $\mc{N}$ entails resources of type $\mc{C}$ and $\mc{S}$.

These two examples suggest that heterogeneous resources are tightly intertwined, thus motivating the need for new slicing algorithms that account for these intrinsic relationships. To incorporate coupling within the \textit{Sl-EDGE} framework, we introduce the concept of \textit{collateral functions}.

Let us consider the case where, to instantiate a slice of type $z\in\mc{T}$, $x$ resources must be allocated on edge node $d \in \devs_k, k \in \clusters$. For any resource type $t\in\mathcal{T}\setminus\{z\}$, we define the collateral function $\alpha_{d,k}^{z\rightarrow t}(x):\mathbb{R}\rightarrow\mathbb{R}$. This function (i) reflects coupling among heterogeneous resources, and (ii) determines how many resources of type $t$ should be allocated on edge node $d$ when allocating $x$ resources of type $z$. Of course, $\alpha_{d,k}^{z\rightarrow z}(x)=x$.

In this paper, we model resource coupling as an increasing linear function with respect to $x$. This way, the number of resources of type $z$ needed to instantiate $x$ resources of type $t$ on a given edge node $d$ can be evaluated as $\alpha_{d,k}^{t\rightarrow z}(x)=A_{d,k}^{t\rightarrow z}x$, with $A_{d,k}^{t\rightarrow z}$ being measured in units of type $z$ per unit of type $t$, \textit{e.g.}, GIPS per megabyte.\footnote{We assume that $A_{d,k}^{t\rightarrow z}$ differs among edge nodes, but it is uniform across services of type $t$. When different services of type $t$ have different values of $A_{d,k}^{t\rightarrow z}$ (\textit{e.g.,} video encoding and file compression might require a different number of GIPS to process the same data) we can extend $\mc{T}$ by adding service-specific classes with different $A_{d,k}^{t\rightarrow z}$ values.} %This extension, however, does not impact the effectiveness of Sl-EDGE, since the three algorithms presented here are general and can be readily applied to this complex scenario.}. 
% For this reason, in the following of this paper we consider the case of linear collateral functions. 
However, the more general case where $\alpha_{d,k}^{t\rightarrow z}(x)$ is a non-linear function can be easily related to the linear case by using well-established and accurate piece-wise linearization techniques~\cite{lin2013review}. For any $k\in\clusters$ and $d\in\devs_k$, let $\mathbf{A}_{d,k}$ be the \textit{collateral matrix} for edge node $d$. Such a matrix can be written as 

\smallskip
\begin{equation} \label{mat:alpha}
    \mathbf{A}_{d,k}= \left(
    \begin{matrix}
     1                                  & A_{d,k}^{\mc{S}\rightarrow \mc{N}} & A_{d,k}^{\mc{C}\rightarrow \mc{N}}\\
     A_{d,k}^{\mc{N}\rightarrow \mc{S}} & 1                                  & A_{d,k}^{\mc{C}\rightarrow \mc{S}}\\
     A_{d,k}^{\mc{N}\rightarrow \mc{C}} & A_{d,k}^{\mc{S}\rightarrow \mc{C}} & 1
    \end{matrix}
    \right).
\end{equation}
\smallskip

\section{Edge slicing problem and its optimal solution} \label{sec:optimal}
\smallskip
\smallskip
\smallskip

The key targets of \textit{Sl-EDGE} are to (i) maximize profits generated by infrastructure slice rentals, and (ii) allow location-aware and dynamic instantiation of slices in multiple clusters, while (iii) avoiding over-provisioning of resources to avoid congestion and poor performance. We formalize the above three targets with the edge slicing optimization problem~(ESP) introduced below. 

% \vspace{-0.5cm}
\begin{align} 
 \underset{\mathbf{y},\bs{\sigma}}{\text{maximize}} &  \sum_{k\in\clusters}\sum_{z\in\mc{T}} \sum_{r\in\req^z} v^z_{r} y^z_{r} \label{prob:utility} \tag{ESP} \\
 \text{subject to} & \hspace{-0.1cm}\sum_{d\in\devs_k} \sigma_{r,d}^z = \tau_{r,k}^z y_{r}^z, \hspace{0.05cm} \forall z\in\mc{T}, k\in\clusters, r\in\req^{z} \label{eq:c1} \\
& \hspace{-0.25cm} \sum_{r\in\req^z}\! \sum_{t\in\mc{T}}  \! \! \alpha_{d,k}^{t\rightarrow z}(\sigma_{r,d}^t) \!
\leq \! \res_{d,k}^z, \hspace{0.05cm} \forall \! z\! \in\! \mc{T}, k\! \in\! \clusters,  d\! \in\! \devs_k \label{eq:c2} \\
& \hspace{-0.1cm} y_r^z\in\{0,1\}, 
\hspace{0.2cm} \forall z\in\mc{T}, r\in\req^z \\
& \hspace{-0.1cm} \sigma_{r,d}^z \geq 0, 
\hspace{0.1cm} \forall z\in\mc{T}, r\in\req^z, k\in\clusters, d\in\devs_k \label{eq:c4}
\end{align}
\noindent
\smallskip
where $\mathbf{y}=(y_r^z)_{z\in\mc{T},r\in\req^z}$ and $\bs{\sigma}=(\sigma_{r,d}^z)_{z\in\mc{T},r\in\req^z,d\in\devs}$ respectively are the \textit{slice admission} and \textit{resource slicing} policies. Quantity $y_r^z$ is a binary variable such that $y_r^z=1$ if request $r$ is admitted, $y_r^z=0$ otherwise.
Similarly, $\sigma_{r,d}^z$ represents the amount of resources of type $z$ that are assigned to $r$ on edge node $d$.

One can easily verify that Problem~\eqref{prob:utility} meets the previously mentioned requirements, since it (i) aims at maximizing the total value of the admitted slice requests; (ii) guarantees that each admitted slice obtains the required amount of resources in each cluster (Constraint~\eqref{eq:c1}), and (iii) prevents resource over-provisioning on each edge node (Constraint~\eqref{eq:c2}).
% Also, it is worth to remark that Constraint \ref{eq:c2} might suffer from non-linearities due to the possibly non-linear expression of the collateral function $\alpha_{d,k}^{t\rightarrow z}(x)$ \cite{}. 

Given the presence of both continuous and 0-1 variables, \eqref{prob:utility} belongs to the class of 
MILPs problems, well-known to be hard to solve. More precisely, in Theorem~\ref{th:hard} we will prove that \eqref{prob:utility} is NP-hard even in the case of requests having the same value and edge nodes belonging to a single cluster.

\begin{theorem}[NP-hardness] \label{th:hard}
Problem \eqref{prob:utility} is NP-hard.
\end{theorem}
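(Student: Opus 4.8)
The plan is to establish NP-hardness by exhibiting a polynomial-time reduction \emph{to} the decision version of \eqref{prob:utility} from a classical NP-complete problem, and the cleanest choice is the 0/1 \textsc{Knapsack} problem. First I would collapse \eqref{prob:utility} to its simplest non-trivial special case: a single cluster ($K=1$) containing a single edge node $d$, and a single active resource type (say networking), obtained by zeroing the off-diagonal entries of the collateral matrix $\mathbf{A}_{d,1}$ so that constraint \eqref{eq:c2} for the storage and computing types is vacuous. In this regime constraint \eqref{eq:c1} forces $\sigma_{r,d}^{\mc{N}} = \tau_{r,1}^{\mc{N}} y_r^{\mc{N}}$ (there is nowhere else to place the resources), so the continuous variables are pinned by the binary ones and \eqref{prob:utility} reduces \emph{verbatim} to selecting a subset of requests.

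Concretely, given a \textsc{Knapsack} instance with items of weight $w_i$ and value $p_i$ and capacity $W$, I would create one networking request $r_i$ per item with demand $\tau_{r_i,1}^{\mc{N}} = w_i$ and value $v_{r_i}^{\mc{N}} = p_i$, and set the node capacity to $\res_{d,1}^{\mc{N}} = W$. Then \eqref{eq:c2} becomes $\sum_i w_i y_{r_i}^{\mc{N}} \le W$ and the objective becomes $\max \sum_i p_i y_{r_i}^{\mc{N}}$, i.e.\ exactly 0/1 \textsc{Knapsack}. Since the transformation is clearly polynomial and an optimal admission policy of value $\ge V$ corresponds one-to-one to a knapsack packing of profit $\ge V$, NP-hardness of \eqref{prob:utility} follows. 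Membership of the decision version in NP is immediate because, for a fixed admission vector $\mathbf{y}$, checking the existence of a feasible $\bs{\sigma}$ reduces to a linear feasibility test.

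The more delicate part is the strengthened claim that hardness persists when all requests share a common value and the instance has a single cluster; under equal values the objective degenerates into maximizing the \emph{number} of admitted requests, and with a single resource type this becomes easy, since \eqref{eq:c1}--\eqref{eq:c2} then make an admitted set feasible \emph{iff} its total demand does not exceed the aggregate capacity $\sum_{d} \res_{d}$ (a greedy rule solves it). I would therefore keep several resource types active and exploit the coupling matrix $\mathbf{A}_{d,k}$ to turn \eqref{eq:c2} into a genuinely \emph{multi-dimensional} packing constraint, reducing from a uniform-profit NP-complete packing problem such as cardinality-constrained \textsc{Subset-Sum} (equivalently, the max-cardinality multidimensional knapsack). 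The main obstacle I anticipate is precisely the continuous splitting variables: because a request may be fractionally spread across nodes via \eqref{eq:c1}, and because linear collateral functions cannot penalize fine-grained splitting, I must design the capacities and the entries $A_{d,k}^{t\rightarrow z}$ so that the optimizer's freedom in choosing $\bs{\sigma}$ does not create admissible subsets beyond those corresponding to feasible packings. Establishing this faithfulness in both directions---ruling out that fractional placements relax the intended combinatorial constraint---is the crux of the argument; the knapsack reduction above already suffices for the theorem as stated.
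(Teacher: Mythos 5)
Your proof is correct for the theorem as stated, and it takes a genuinely different route from the paper's. The paper reduces from the Splittable Multiple Knapsack Problem (SMKP): it restricts \eqref{prob:utility} to a single cluster, a single resource type, and unit values $v_r^z=1$, and argues that the resulting problem---items of size $\tau^z_{r,k}$ that may be split across knapsacks of capacity $\res^z_{d,k}$ but must be fully packed to count, per Constraint~\eqref{eq:c1}---is exactly SMKP, whose NP-hardness it cites. You instead collapse to a \emph{single edge node}, which pins $\sigma_{r,d}^z=\tau_{r,1}^z y_r^z$ and makes \eqref{prob:utility} verbatim 0/1 \textsc{Knapsack}; this eliminates the splitting variables entirely, so there is nothing to verify about fractional placements, and your added NP-membership remark (LP feasibility for fixed $\mathbf{y}$) is a correct bonus the paper omits. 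What each buys: the paper's version is advertised as stronger (hardness even with uniform values), while yours keeps general values but is self-contained and airtight. Notably, your side remark about the uniform-value case is not idle caution---it pinpoints a real fragility in the paper's argument. With unbounded splitting and one resource type, an admitted set $S$ is feasible iff $\sum_{r\in S}\tau_r \leq \sum_{d}\res_d$ (allocate $\sigma_{r,d}$ proportionally to $\res_d$), so maximizing the \emph{number} of admitted unit-value requests is solvable greedily in polynomial time; the paper's restricted instance is therefore hard only if the cited SMKP variant carries additional restrictions (e.g., bounded fragmentation) that the displayed ESP instance does not enforce. Your single-node reduction---or, equivalently, the observation that with arbitrary values the splittable multi-node case collapses to a single knapsack of capacity $\sum_d \res_d$---establishes the theorem without leaning on that delicate point.
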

\begin{proof}
To prove this theorem, we reduce the Splittable Multiple Knapsack Problem (SMKP), which is NP-hard~\cite{10.1007/978-3-319-18173-8_27}, to an instance of \eqref{prob:utility}. Let us assume that all edge nodes belong to the same cluster $k$ and all submitted slice requests are of the same type $z\in\mc{T}$. Furthermore, we assume that all requests have value $v_r^z=v_s^z=1$ for any $(r,s)\in\req\times\req$. 
Since all requests are of the same type $z$, $\alpha_{d,k}^{z\rightarrow z}(x)=x$ for any edge node $d\in\devs$.
Let us now consider the SMKP, whose statement is as follows: \textit{given a set of knapsacks (the edge nodes) with limited capacity ($\res^z_{d,k}$) and a set of items (requests) with certain value ($v_r^z$) and size ($\tau^z_{r,k}$), assume that items can be split among the knapsacks while satisfying Constraint~\eqref{eq:c1}, is there an allocation policy that maximizes the total number of items added to the knapsacks without overfilling them?} We observe that Problem \eqref{prob:utility} is a reduction of the SMKP. Since this reduction can be built in polynomial time, it follows that Problem~\eqref{prob:utility} is NP-hard. 
\end{proof}

%Even though Theorem \ref{th:hard} proves that Problem \eqref{prob:utility} is NP-hard, it does not necessarily mean that it cannot still be optimally solved. 
% Indeed, optimal solutions can always be computed through brute-force search. However, in this case optimality comes at the price of exponential search time -- an impractical condition for many real-world scenarios. For this reason, 
Problem \eqref{prob:utility} can be solved by means of efficient and well-established exact Branch-and-Cut (B\&C) algorithms. Even though the worst-case complexity of such algorithms is exponential, B\&C leverages structural properties of the problem to confine the search space, thus reducing the time needed to compute an optimal solution.
The B\&C procedure---not reported here for the sake of space---can be found in~\cite{Elf2001}. We now focus on how to overcome some of the limitations of B\&C. Specifically, B\&C suffers from high computational complexity, and  requires a centralized entity with perfect knowledge, both of which are unacceptable in large-scale and dynamic networks. 
% Also, to gather complete information on the infrastructure
% 
% to achieve perfect knowledge results in large overhead to gather complete information on the infrastructure.

% For the above reasons, the next section will be devoted to the design of approximation algorithms for Problem \eqref{prob:utility} to i) reduce the computational complexity of the problem; and ii) minimize the overhead traffic traversing the network.

\section{Approximation Algorithms} \label{sec:approximation}

We now design two approximation algorithms for \eqref{prob:utility} whose primary objective is to (i) reduce the computational complexity of the problem, and (ii) minimize the overhead traffic traversing the network.
In Section~\ref{sec:aggreagation} and Section~\ref{sec:distributed}, we present the algorithmic implementation of the two algorithms, and further discuss their
%features in terms of
optimality, complexity and overhead.

\subsection{Decentralization through virtualization} \label{sec:aggreagation}

One of the main sources of complexity in Problem~\eqref{prob:utility} is the large number of optimization variables $\bf{y}$ and $\bs{\sigma}$. However, we notice that $R = \sum_{z\in\mc{T}}|\req^z|$, where $|\cdot|$ is the set cardinality operator. On the contrary, the number of $\bs{\sigma}$ variables is $\mc{O}(RD)$, with $D$ being the total number of edge nodes in the infrastructure. While $R$ is generally limited to a few tens of requests, the number $D$ of edge nodes deployed in the network might be very large. However, a big portion of these edge nodes are equipped with hardware and software components that are either similar or exactly the same.  Thus, we can leverage similarities among edge nodes to reduce the complexity of Problem \eqref{prob:utility} while achieving close-to-optimal solutions and reduced control overhead.

Edge nodes with similar collateral functions will behave similarly.
However, being similar in terms of $\alpha$ only does not suffice to determine whether or not two edge nodes are similar. In fact, nodes with similar $\alpha$ might have a different amount of available resources. For this reason, we leverage the concept of \textit{similarity functions}~\cite{huang2008similarity}.
\begin{definition} \label{def:similar}
Let $\Delta(d',d''):\devs\times\devs\rightarrow\mathbb{R}$ be a function to score the similarity between edge nodes $d'$ and $d''$. 
Two edge nodes $d',d''\in\devs$ are said to be $\epsilon$\textit{-similar} with respect to $f$ if $\Delta(d',d'')\leq \epsilon$, for any $\epsilon\in\mathbb{R}_{\geq0}$. 
If $\Delta(d',d'')=0$, we say that $d'$ and $d''$ are identical.
\end{definition}

Through
%the concept of
$\epsilon$-similarity, we can first determine which edge nodes inside the same cluster are similar, and then abstract their physical properties to generate a \textit{virtual edge node}. 
For the sake of generality, in this paper we do not make any assumption on $\Delta(\cdot)$ (interested readers are referred to~\cite{xu2005survey} for an exhaustive survey on the topic).
However, the impact of $\Delta(\cdot)$ and $\epsilon$ on the overall system performance will be first discussed in Section~\ref{sec:approximation:impl}, and then evaluated in Section~\ref{sec:numerical}.

\begin{figure}[!t]
    \centering
    \setlength\belowcaptionskip{-.3cm}
    \includegraphics[width=\columnwidth]{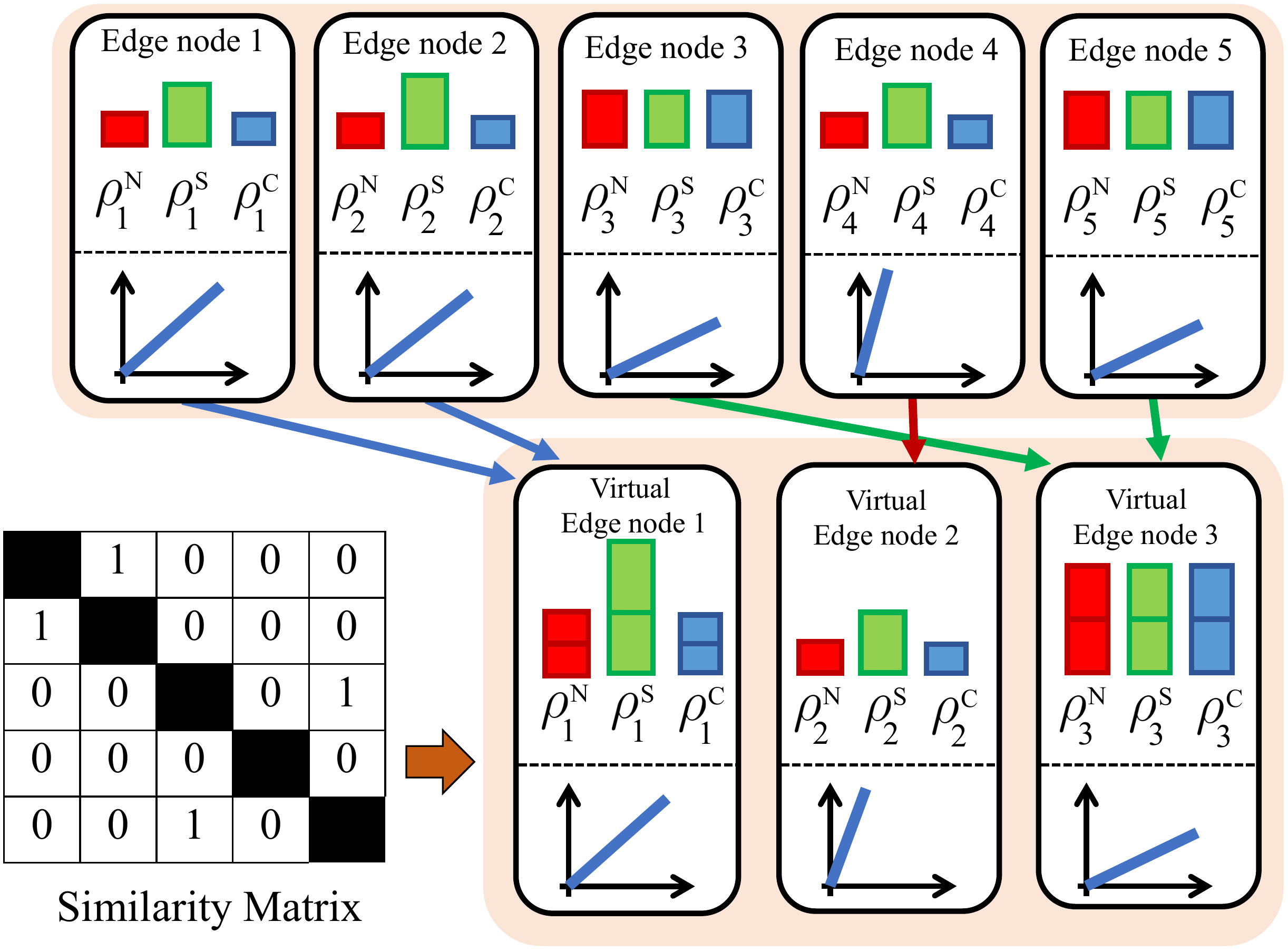}
    \caption{An example of the virtual edge node generation in Step 1. The similarity matrix determines which edge nodes can be aggregated. Similar edge nodes (\textit{i.e.}, $\{1,2\}$ and $\{3,5\}$) are aggregated into virtual ones. Edge node 4 is not aggregated as it has similar resources to $\{1,2\}$, but different collateral function.}
    \vspace{-0.2cm}
    \label{fig:virtualization}
 \end{figure}

Now we present V-ESP, an approximation algorithm that leverages virtualization concepts to compute a solution to Problem \eqref{prob:utility}. The main steps of V-ESP are as follows:
%\begin{enumerate}[label=\textit{Step \arabic{enumi}:},leftmargin=*,labelindent=0.1em]

    $\bullet$ \textit{Step~1: (Virtual edge node generation):} for each cluster $k$, we build the $|\devs_k|\times|\devs_k|$ \textit{similarity matrix} $\mc{S}_k$. For any real $\epsilon\geq0$, element $s_{d',d''}\in \mc{S}_k$ indicates whether or not $d'$ and $d''$ are $\epsilon$-similar. That is, $s_{d',d''}=1$ if $\Delta(d',d'')\leq\epsilon$, $s_{d',d''}=0$ otherwise.
    We partition the set $\devs_k$ into $G_k\geq1$ independent subsets that contain similar edge nodes only. Partitions are generated such that $\bigcup_{g=1}^{G_k} \devs_{k,g} = \devs_k$ and $\devs_{k,j} \cap \devs_{k,i} = \emptyset$ for any $i,j=1,2,\dots,G_k$.
 
     Each non-singleton partition is converted into a virtual edge node. Specifically, for each non-singleton partition $\devs_{k,g}$, we define a virtual edge node $\tilde{d}_g$ whose available resources are equal to the sum of the available resources of all edge nodes in the partition, \textit{i.e.,} $\res^z_{\tilde{d}_g,k}=\sum_{d\in\devs_{k,g}} \res^z_{d,k}$. The collateral function of the virtual edge node $d_g^v$ is constructed as
    %  \begin{equation} \label{eq:f}
        $\alpha_{\tilde{d}_g,k}^{t\rightarrow z} = f(\devs_{k,g},t,z)$,
    % \end{equation}
    % \noindent
    where $f(\cdot)$ is a function that generates a virtualized collateral function for virtual edge node $\tilde{d}_g$ discussed in Section~\ref{sec:approximation:impl}. We show an example of virtualization procedure in \reffig{fig:virtualization}.

    $\bullet$ \textit{Step~2: (Virtual Edge Nodes Advertisement):} each cluster $k$ advertises \textit{Sl-EDGE} the set $\devs_k=(\tilde{d}_g)_{g=1,\dots,G_k}$ of $G_k$ virtualized edge nodes, as well as their virtual collateral functions ($\alpha_{\tilde{d}_g,k}^{t\rightarrow z}$) and available resources ($\res^z_{\tilde{d}_g,k}$).

    $\bullet$ \textit{Step~3: (Solve virtualized ESP):} \textit{Sl-EDGE} solves Problem \eqref{prob:utility} with virtualized edge nodes through B\&C. Slice admission and resource slicing policies $(\bf{\tilde{y}}^*,\bs{\tilde{\sigma}}^*)$ are computed and each cluster receives the $2$-tuple $(\bf{\tilde{y}}^*,\bs{\tilde{\sigma}}^*_k)$, with $\bs{\tilde{\sigma}}^*_k = (\tilde{\sigma}^{z*}_{r,\tilde{d}_g})_{z\in\mc{T},r\in\req,g=1,\dots,G_k}$ being the resource allocation policy over the virtualized edge nodes of cluster $k$.

    $\bullet$ \textit{Step~4: (Virtualized edge node resource allocation):} upon receiving $(\bf{\tilde{y}}^*,\bs{\tilde{\sigma}}^*_k)$, cluster $k$ solves $G_k$ Linear Programming (LP) problems in parallel, one for each virtual edge node $g$. These LPs are formulated as follows:
    % \vspace{-0.1cm}
    \begin{align} 
     \text{find} & \hspace{0.3cm} \bs{\sigma}_{k,g} \label{prob:approx:utility}  \\
     \text{subject to} & \hspace{0.3cm} \sum_{d\in\devs_{k,g}} \sigma_{r,d}^z = \tilde{\sigma}^{z*}_{r,\tilde{d}_g}, \hspace{0.1cm} \forall r\in\req  \\ 
    & \hspace{0.1cm} \mbox{Constraints}~\eqref{eq:c2}, ~\eqref{eq:c4} \nonumber
    %  \vspace{-0.05cm}
    \end{align}
    \noindent
    which can be optimally solved by computing any feasible resource allocation policy that satisfies all constraints.

    $\bullet$ \textit{Step~5: (Slicing Policies Construction):} let $\bs{\sigma}^*_{k,g}=(\bs{\sigma}^*_{k,d})_{d\in\devs_{k,g}}$ be the optimal solution of the $g$-th instance of~\eqref{prob:approx:utility}. The resource slicing policy $\bs{\sigma}^*_{k}$ for cluster $k$ is constructed by stacking all $G_k$ individual solutions computed by individual clusters, \textit{i.e.}, $\bs{\sigma}^*_{k} = (\bs{\sigma}^*_{k,g})_{g=1,\dots,G_k}$. The final slice admission and resource slicing policies are $(\bf{\tilde{y}}^*,\bs{\sigma}^*)$ with $\bs{\sigma}^*=(\bs{\sigma}^*_{k})_{k\in\clusters}$.
%\end{enumerate}

\smallskip
Through V-ESP, each cluster exposes $G_k\leq|\devs_k|$ virtual edge nodes only, rather than $|\devs_k|$ (Steps~1-2). Thus, virtualization reduces the number of edge nodes and thus the number of variables in~\eqref{prob:utility}.  Moreover, since virtualization leaves the structure of the slicing problem unchanged, we  efficiently solve Step~3 through the same B\&C techniques used for~\eqref{prob:utility}. In addition, while Steps~3-5 are executed whenever a new slicing policy is required (\textit{e.g.,} tenants submit new slice requests or the slice rental period expires), Steps~1-2 are executed only when the structure of the physical infrastructure changes (\textit{e.g.,} edge nodes are turned on/off or are subject to hardware modifications). This way, we can further reduce the overhead.
In short, V-ESP splits the computational burden among the NO (Step 3) and the edge nodes (Steps~1-2 and~4), which jointly provides the high efficiency typical of centralized approaches while enjoying reduced complexity of decentralized algorithms. 

\subsubsection{Design Aspects of Virtualization} \label{sec:approximation:impl}

% In order to take full advantage of edge node virtualization, it is fundamental to first understand the importance to Sl-EDGE of $\Delta(\cdot)$ and $\epsilon$ in Definition \ref{def:similar}, as well as the $f(\cdot)$ function.
% Even though in Section \ref{sec:numerical} we quantify their impact on the efficiency of the solutions generated by the V-ESP algorithm, we now discuss some preliminary insights and design choices.  

Step~1 relies on $\epsilon$-similarity to aggregate edge nodes and reduce the search space. Intuitively, the higher the value of $\epsilon$, the smaller the set of virtual edge nodes generated in Step 1, the faster \textit{Sl-EDGE} computes solutions in Step~3. However, large $\epsilon$ values might group together edge nodes with different available resources and collateral functions. In this case, (i) Step~1 might produce virtual edge nodes that poorly reflect physical edge nodes features, and (ii) solutions computed at Step~3 might not be feasible when applied to Step~4. Thus, there is a trade-off between accuracy and computational speed, which will be the focus of Section~\ref{sec:numerical:epsilon}. 

Another aspect that influences the efficiency and feasibility of solutions generated by the V-ESP algorithm is the function $f(\cdot)$, which transforms collateral functions of similar edge nodes into an aggregated collateral function.
% As already mentioned in Section \ref{sec:model:coupling}, collateral functions can always be modeled as (piece-wise) linear functions. In the following we consider the linear case only, but the piece-wise case can be addressed similarly.
Recall that $f(\cdot)$, which can be represented as a collateral matrix~\eqref{mat:alpha}, must mimic the actual behavior of physical edge nodes belonging to the same partition $g$. 
% To generate a virtual collateral matrix by simply averaging over collateral matrices of all edge nodes in $\devs_{k,g}$
% , \textit{i.e.}, $a_{\tilde{d}_g,k}^{z\rightarrow t} = 1/|\devs_{k,g}|\sum_{d\in\devs_{k,g}} a_{d,k}^{z\rightarrow t}$, 
% might occasionally overestimate the capabilities of virtual edge nodes, and produce unfeasible solutions. Accordingly, the virtual collateral matrix~\eqref{mat:alpha} for virtual edge node $\tilde{d}_g$ is set to $A_{\tilde{d}_g,k}^{z\rightarrow t} = \max_{d\in\devs_{k,g}} \{A_{d,k}^{z\rightarrow t}\}$. Although this model underestimates the capabilities of physical edge nodes, it always produces feasible solutions in Steps~3 and~4, yet it is prone to admit less requests than the optimal algorithm.
To avoid overestimating the capabilities of virtual edge nodes, and producing unfeasible solutions, the generic element of the virtual collateral matrix~\eqref{mat:alpha} for virtual edge node $\tilde{d}_g$ is set to $A_{\tilde{d}_g,k}^{z\rightarrow t} = \max_{d\in\devs_{k,g}} \{A_{d,k}^{z\rightarrow t}\}, \forall z, t \in \mc{T}$. Although this model underestimates the capabilities of physical edge nodes and may admit less requests than the optimal algorithm, it always produces feasible solutions in Steps~3 and~4.

% For this reason, we identify the following two models:
% \begin{itemize}[leftmargin=*]
%     \item \textit{Mean Value Model (MV)}: we generate a virtual collateral matrix for virtual edge node $\tilde{d}_g$ by averaging over collateral matrices of all edge nodes in $\devs_{k,g}$, i.e., $a_{\tilde{d}_g,k}^{z\rightarrow t} = 1/|\devs_{k,g}|\sum_{d\in\devs_{k,g}} a_{d,k}^{z\rightarrow t}$;
%     \item \textit{Upper-Bounding Model (UB)}: the virtual collateral matrix for virtual edge node $\tilde{d}_g$ is set to $a_{\tilde{d}_g,k}^{z\rightarrow t} = \max_{d\in\devs_{k,g}} \{a_{d,k}^{z\rightarrow t}\}$.
% \end{itemize}
% We point out that since the MV model averages over all edge nodes, it might generate virtual collateral functions that diverge from the actual values of the collateral matrices. Thus, this model might occasionally overestimate the capabilities of virtual edge nodes. Conversely, by taking the highest value in the collateral matrix, the UB model underestimates the capabilities of physical edge nodes. Clearly, this always produces feasible solutions in Steps~3 and~4,  yet it is prone to admit less requests than the optimal algorithm.
% \vspace{-0.3cm}
\subsection{Distributed Edge Slicing} \label{sec:distributed}

In this section we design a distributed edge slicing algorithm for Problem~\eqref{prob:utility} such that clusters can locally compute slicing strategies. We point out that making~\eqref{prob:utility} distributed is significantly challenging. In fact, both utility function and constraints are coupled with each other through the optimization variables $\bs{\sigma}$ and $\mathbf{y}$. This complicates the decomposition of the problem into multiple independent sub-problems. 

In order to decouple the problem into multiple independent sub-problems, we introduce the auxiliary variables $\mathbf{y}_k=(y^z_{r,k})_{z\in\mc{T}, r\in\req^z}$ such that $y^z_{r,k} = y^z_r$ for any request $r$ and cluster $k$. 
Thus, Problem \eqref{prob:utility} can be rewritten as
% By substituting the $y_r^z$ variables in Problem \eqref{prob:utility} with the newly introduced auxiliary variables, it can be rewritten as
 %\vspace{-0.3cm}
\begin{align} 
 \underset{\bs{\sigma},\mathbf{y}}{\text{maximize}} & \hspace{0.2cm} \frac{1}{|\clusters|} \sum_{k\in\clusters}\sum_{z\in\mc{T}} \sum_{r\in\req^z} v_r^z y^z_{r,k} \label{prob:desp:utility} \tag{D-ESP} \\
 \text{subject to} & \hspace{0.1cm} \sum_{d\in\devs_k} \sigma_{r,d}^z = \tau_{r,k}^z y_{r,k}^z, \hspace{0.1cm} \forall z\in\mc{T}, k\in\clusters, r\in\req^{z} \label{eq:desp:c1} \\ 
& \hspace{0.2cm} y_{r,k}^z = y_{r,m}^z, \hspace{0.2cm} \forall z\in\mc{T}, (k,m)\in\clusters^2,  r\in\req^z
\label{eq:desp:c2} \\
& \hspace{0.2cm} y_{r,k}^z\in\{0,1\} 
\hspace{0.2cm} \forall z\in\mc{T}, r\in\req^z \\
& \hspace{0.2cm} \mbox{Constraints}~\eqref{eq:c2},~\eqref{eq:c4} \nonumber
 \vspace{-0.7cm}
\end{align}
where $\mathbf{y}=(y_{r,k}^z)_{z\in\mc{T}, r\in\req^z,z\in\mc{T}}$, while Constraint \eqref{eq:desp:c2} guarantees that different clusters admit the same requests.

Problem \eqref{prob:desp:utility} is with separable variables with respect to the $K$ clusters.
That is, Problem~\eqref{prob:desp:utility} can be split into $K$ sub-problems, each of them involving only variables controlled by a single cluster. To effectively decompose Problem~\eqref{prob:desp:utility} we leverage the Alternating Direction Method of Multipliers~(ADMM) \cite{boyd2011distributed}. The ADMM is a well-established optimization tool that enforces constraints through quadratic penalty terms and generates multiple sub-problems that can be iteratively solved in a distributed fashion. 

The augmented Lagrangian for Problem \eqref{prob:desp:utility} can be written as follows:
\begin{align}
    L(\bs{\sigma},\mathbf{y},\bs{\lambda},\rho) & = \sum_{k\in\clusters}\sum_{z\in\mc{T}} \sum_{r\in\req^z} v_r^z y^z_{r,k} \nonumber \\
    & - \sum_{z\in\mc{T}} \sum_{r\in\req^z}  \sum_{k\in\clusters}  \sum_{m\in\clusters} \lambda^z_{r,k,m} (y^z_{r,k}-y^z_{r,m}) \nonumber \\
    & - \frac{\rho}{2} \sum_{z\in\mc{T}} \sum_{r\in\req^z}  \sum_{k\in\clusters}  \sum_{m\in\clusters}  (y^z_{r,k}-y^z_{r,m})^2 \label{eq:lagrangian}
\end{align}
\noindent
where $\bs{\lambda}=(\lambda^z_{r,k,m})$ are the so-called \textit{dual variables}, and $\rho>0$ is a step-size parameter used to regulate the convergence speed of the distributed algorithm~\cite{boyd2011distributed}.

%To facilitate the formulation of the ADMM, let us first define some useful notation. 
Let $k\in\clusters$, we define $\mathbf{y}_{-k} = (\mathbf{y}_{m})_{m\in\clusters\setminus\{k\}}$
% REMOVE START
which identifies the slice admission policies taken by all clusters except for cluster $k$. 
% REMOVE END
Similarly, we define $\bs{\sigma}_{-k} = (\bs{\sigma}_{m})_{m\in\clusters\setminus\{k\}}$.
%Now we are ready to solve
Problem~\eqref{prob:desp:utility} can be solved through the following ADMM-based iterative algorithm
\begin{align}
\{\mathbf{y}_k,\bs{\sigma}_k\}(t\!+\!1) & = \argmax_{\mathbf{y}_k,\bs{\sigma}_k} L(\bs{\sigma}_k,\mathbf{y}_k,\bs{\sigma}_{-k}(t),\mathbf{y}_{-k}(t),\bs{\lambda}(t),\rho) \label{eq:admm:prob}\\
    % \mathbf{y}_m(t+1),\bs{\sigma}_m(t+1) & = \argmax_{\mathbf{y}_m,\bs{\sigma}_m} L(\bs{\sigma}_k(t+1),\mathbf{y}_k(t+1),\bs{\sigma}_{-(m,k)}(t),\mathbf{y}_{-(m,k)}(t),\bs{\lambda}(t)) \\
    \lambda^z_{r,k,m}(t\!+\!1) & = \lambda^z_{r,k,m}(t) + \rho (y^z_{r,k}(t+1) - y^z_{r,m}(t+1)) \label{eq:admm:lambda}
\end{align}
\noindent
where each cluster sequentially updates $\mathbf{y}_k$ and $\bs{\sigma}_k$, while the dual variables $\bs{\lambda}$ are updated as soon as all clusters have updated their strategy according to \eqref{eq:admm:prob}. 
%
% where $\mathbf{y}=(\mathbf{y}_k)_{k\in\clusters}$, $\mathbf{y}_k = (y_{r,k}^z)_{r,z}$, $\bs{\sigma}=(\bs{\sigma}_k)_{k\in\clusters}$, $\bs{\sigma}_k = (\sigma_{r,d}^z)_{r,z,d\in\devs_k}$.
To update \eqref{eq:admm:prob} each cluster solves the following quadratic problem
\begin{align} 
 \underset{\bs{\sigma}_k,\mathbf{y}_k}{\text{maximize}} & \! \sum_{z\in\mc{T}} \! \sum_{r\in\req^z} \!\tilde{v}^z_{r,k}(\mathbf{y}_{-k}(t\!-\!1),\bs{\lambda}(t\!-\!1))  y^z_{r,k} \!-\! 2\rho (y^z_{r,k})^2 \label{prob:dcesp:utility} \tag{DC-ESP} \\
 \text{subject to} & \hspace{0.2cm} \mbox{Constraints}~\eqref{eq:c2},~\eqref{eq:c4},~\eqref{eq:desp:c1},~\eqref{eq:desp:c2}, \nonumber
\end{align}
\noindent
where $\tilde{v}^z_{r,k}$ is the adjusted value of request $r$ defined as 
\begin{align}
    \tilde{v}^z_{r,k}(\mathbf{y}_{-k}(t),\bs{\lambda}(t)) & = v^z_{r,k}  - \sum_{m\in\clusters\sm{k}} \left( \lambda^z_{r,k,m}(t) - \lambda^z_{r,m,k}(t)\right) \nonumber \\
    & + \rho~ \phi_{r,k}(\mathbf{y}_{-k}(t)) \label{eq:value}
\end{align}
\noindent
and $\phi_{r,k}(\mathbf{y}_{-k}(t)) = \sum_{m\in\clusters\sm{k}} y^z_{r,m}(t)$ is used by cluster $k$ to obtain the number of clusters that have accepted request $r$. 

The advantages of Problem \eqref{prob:dcesp:utility} are that (i) clusters do not need to advertise the composition of the physical infrastructure to the IP or to other clusters, and (ii) it can be implemented in a distributed fashion. Indeed, at any iteration $t$, the only parameters needed by cluster $k$ to solve \eqref{eq:admm:prob} are the dual variables $\bs{\lambda}(t-1)$ and the number $\phi_{r,k}(\mathbf{y}_{-k}(t))$ of clusters that admitted the request $r$ at the previous iteration. 

It has been shown that ADMM usually enjoys linear convergence~\cite{shi2014linear}, but improper choices of  $\rho$ might generate oscillations. To overcome this issue and achieve convergence, we implemented the approach proposed in \cite[Eq.~(3.13)]{boyd2011distributed} where $\rho$ is updated at each iteration of the ADDM. The optimality and convergence properties of DC-ESP will be exstensively evaluated in Figs. \ref{fig:epsilon:iter} and \ref{fig:epsilon:gap}.

% The algorithmic implementation of the distributed solution is presented in Algorithm \ref{alg:distributed}.

% \begin{algorithm}[h]
% \begin{algorithmic}[1]
% \caption{D-ESP}
% \label{Alg:desp} % \small
% \State \texttt{Input} $x, y$;
% \State \texttt{Output} A slicing enforcement policy $\mathbf{x}^G=(x^G_{m,b,n,t})_{m,b,n,t}$;
% \State \texttt{Set} $x^G_{m,b,n,t}=0$ for all $m\in[0,1]$; 
% \State Compute the linking index $\mathbf{l}=(l_m)_{m\in[0,1]}$;
% \While{stopping criterion not reached}
%     \For{each cluster $k\in\mc{K}$}
%     \State Update $\tilde{v}^z_{r,k}$ according to \eqref{eq:value};
%     \State $\{\mathbf{y}_k,\bs{\sigma}_k\} \leftarrow$ The solution of Problem \eqref{prob:dcesp:utility};
%     \EndFor
%     \State Update dual variables $\bs{\lambda}$ according to \eqref{eq:admm:lambda};
%     \State $t \leftarrow t+1$;
% \EndWhile
% \end{algorithmic}
% \end{algorithm}

% \vspace{-0.2cm}
\section{Numerical Results} \label{sec:numerical}

We now assess the performance of the three slicing algorithms described in Section~\ref{sec:optimal} and Section~\ref{sec:approximation} by (i) simulating a MEC-enabled 5G network, and by (ii) comparing the algorithms with the recently-published DIRECT framework \cite{direct}.

We consider a scenario where edge nodes provide mobile subscribers with 5G NR connectivity as well as storage and computation MEC services, such as caching and video decoding. We assume that (i) edge nodes share the same NR numerology---more precisely, networking resources are arranged over an OFDM-based resource grid with 50~RBs, and (ii) edge nodes are equipped with hardware components with up to 1~Terabyte of storage capabilities and a maximum of 200~GIPS. We fix the number of RBs for each edge node, while we randomly generate the amount of computation and storage resources at each simulation run. To simulate a realistic scenario with video transmission, storage and transcoding applications, collateral matrices in \eqref{mat:alpha} are generated by randomly perturbing the following matrix $\mathbf{A}^0=[1, 0.0382, 0.1636; 26.178, 1, 0.0063; 0.49, 0.15, 1]$ at each run. 
% This matrix represents coupling across heterogeneous resources when performing video transmission, storage and transcoding. 
%As an example, 16-QAM modulation over $10$~MHz-wide LTE channels with 50 RBs provides a maximum bitrate of $15.264$~Mbit/s, which results in $A_{d,k}^{\mc{S}\rightarrow \mc{N}}=0.0382$~Mbit/RB every second and $A_{d,k}^{\mc{N}\rightarrow \mc{S}}=1/A_{d,k}^{\mc{S}\rightarrow \mc{N}}=26.178$. To process $15.264$~Mbit/s, it takes $24.4224$~GIPS (e.g., turbo-decoding)~\cite{holma2009lte}, which results in $A_{d,k}^{\mc{N}\rightarrow \mc{C}} = 0.49$~GIPS/RB. Similarly, a 1-second long compressed FullHD 30fps video approximately occupies $500$~kB and requires $80$~GIPS to decode. This gives $A_{d,k}^{\mc{C}\rightarrow \mc{N}}=0.1636$~RB/GIPS and $A_{d,k}^{\mc{C}\rightarrow \mc{S}}=0.0062$~MB/GIPS. Other parameters can be computed similarly.
To give an example, processing a data rate of $15.264$~Mbit/s (equivalent to LTE 16-QAM with 50 RBs) requires $24.4224$~GIPS (\textit{e.g.}, turbo-decoding)~\cite{holma2009lte}, which results in $A_{d,k}^{\mc{N}\rightarrow \mc{C}} = 0.49$~GIPS/RB. Similarly, a 1-second long compressed FullHD 30fps video approximately occupies $500$~kB and requires $80$~GIPS to decode, thus $A_{d,k}^{\mc{C}\rightarrow \mc{S}}=0.0062$~MB/GIPS. We assume that the physical infrastructure consists of $K=5$ MEC-enabled edge clusters, each containing the same number of edge nodes but equipped with different amount of available resources and collateral functions. We model $\Delta(\cdot)$ as the \textit{cosine similarity function} \cite{huang2008similarity} and, unless otherwise stated, the aggregation threshold is set to $\epsilon = 0.1$. Slice requests and the demanded resources are randomly generated at each run.

In the following, we refer to the optimal B\&C algorithm in Section~\ref{sec:optimal} as O-ESP. Similarly, the two approximation algorithms proposed in Section~\ref{sec:aggreagation} and Section~\ref{sec:distributed} are referred to as V-ESP and DC-ESP respectively.

\subsection{The impact of coupling on MEC-enabled 5G systems} \label{sec:num:over-provisioning}

DIRECT \cite{direct} provides an efficient distributed slicing algorithm for networking and computing resources in MEC-enabled 5G networks. 
Although this approach does not account for the case where edge nodes provide both networking and MEC functionalities, it represents the closest work to ours. 

\begin{figure}[t]
    \centering
    \setlength\belowcaptionskip{-.5cm}
    \includegraphics[width=\columnwidth]{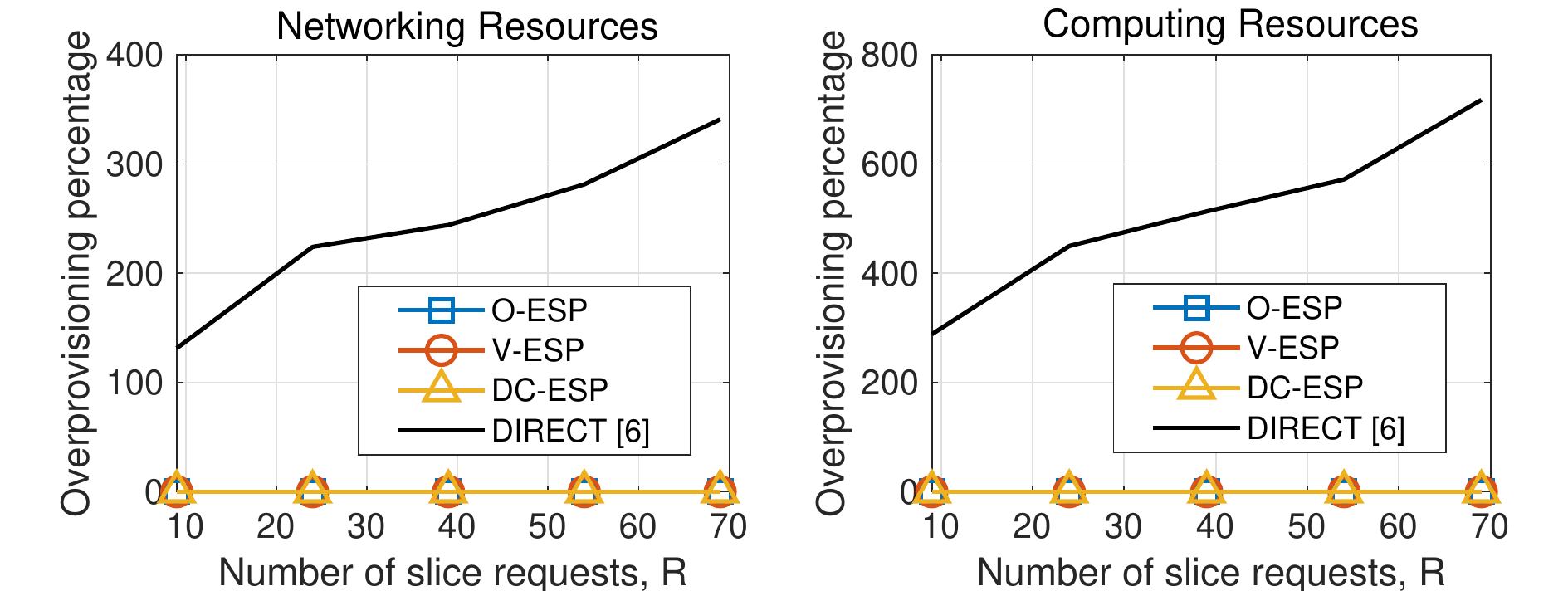}
    \caption{Over-provisioning of networking and computational resources of \textit{Sl-EDGE} and DIRECT \cite{direct}.}
    \label{fig:over-provisioning}
 \end{figure}

Moreover, DIRECT does not explicitly slice storage resources. Thus, to perform a fair comparison, we consider the case where tenants do not request any storage resource. Let $D_c = 75$ be the total number of edge nodes in the network. We let tenants randomly generate slice requests to obtain networking and computational resources. Results are shown in \reffig{fig:over-provisioning}, where any positive value indicates resource over-provisioning.

\reffig{fig:over-provisioning} shows that \textit{Sl-EDGE} never produces over-provisioning slices. Conversely, since DIRECT does not account for coupling among heterogeneous resources on the same edge node, it always incurs in over-provisioning, allocating up to 6x more resources than the available ones. These results conclude that \textit{already existing solutions, which perform well in 5G systems with networking and MEC functionalities decoupled at different points of the network, cannot be readily applied to scenarios where resources are simultaneously handled by edge nodes---which strongly motivates the need for approaches such as \textit{Sl-EDGE}}.

% \vspace{-0.1cm}
\subsection{Maximizing the number of admitted slices} \label{sec:num:number}
% \vspace{-0.1cm}

Let us now focus on the scenario where the IP owning the infrastructure aims at maximizing the number of slice requests admitted by \textit{Sl-EDGE}---to maximize resource utilization, for instance.
Although each slice request $r$ comes with an associated (monetary) value $v_r>0$, the above can be achieved by resetting the value of each request to $v_r = 1$ in Problem \eqref{prob:utility}.
% \begin{figure}
% \subfigure[Objective function.\label{fig:r:number:utility}]{\includegraphics[width = 0.32\columnwidth]{figures/utility_number_R-eps-converted-to.pdf}} 
% \subfigure[Admission percentage.\label{fig:r:number:sat}]{\includegraphics[width = 0.32\columnwidth]{figures/sat_number_R-eps-converted-to.pdf}} 
% \subfigure[Complexity.\label{fig:r:number:iter}]{\includegraphics[width = 0.32\columnwidth]{figures/iter_number_R-eps-converted-to.pdf}} 
% \vspace{-0.3cm}
% \caption{Sl-EDGE performance when maximizing the number of admitted slice requests. \textbf{[1)Make it bigger; 2) update ylabel = caption]} \label{fig:r:number}}
% \end{figure}

 \begin{figure}[t]
    \centering
    \setlength\belowcaptionskip{-.5cm}
    \includegraphics[width=\columnwidth]{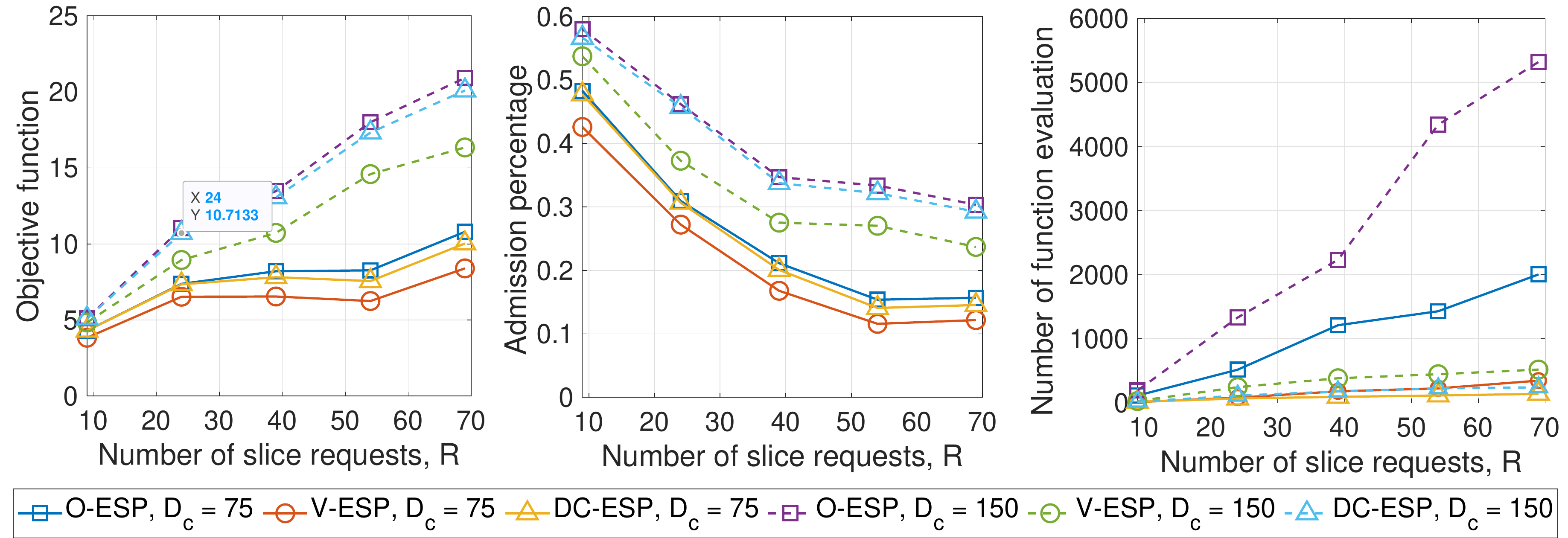}
    \caption{\textit{Sl-EDGE} performance when maximizing the number of admitted slice requests. \label{fig:r:number}}
 \end{figure}

% Let $D_c$ be the total number of edge nodes in the the network.
\reffig{fig:r:number} reports the performance of \textit{Sl-EDGE} as a function of the total number $R$ of generated slice requests for different values of the number of edge nodes $D_c$. We notice that the number of admitted slices increase as the slice requests that are submitted to \textit{Sl-EDGE} increase (left-side plot). However, \reffig{fig:r:number} (center) clearly shows that the percentage of admitted slices rapidly decreases as $R$ increases (only $10$ requests are admitted by O-ESP when $D_c=75$ and $R=70$). This is due to the scarcity of resources at edge nodes, which prevents the admission of a large number of slices. Thus, IPs should either provide edge nodes with more resources, or increase the number of deployed edge nodes. \reffig{fig:r:number} (left), indeed, shows that denser deployments of edge nodes (\textit{i.e.}, $D_c=150$) allows more slices to coexist on the same infrastructure. 

Finally, the right-hand side plot of \reffig{fig:r:number} shows the computational complexity of the three algorithms measured as the number of function evaluations needed to output a solution. As expected, the complexity of all algorithms increases as both $R$ and $D_c$ increase. Moreover, O-ESP, a fully centralized algorithm, has the highest computational complexity. V-ESP and DC-ESP, reduced-complexity versions of O-ESP, instead show lower complexity. However, V-ESP and DC-ESP admit approximately $10\%$ and $16\%$ less requests than O-ESP, respectively, due to their non-optimality.

% \vspace{-0.1cm}
\subsection{Maximizing the profit of the IP} \label{sec:num:value}

Let us now consider the case of slice admission and instantiation for profit maximization  (\reffig{fig:r:value}). In this case, \textit{Sl-EDGE} selects the slice requests to be admitted to maximize the total (monetary) value of the admitted slices. Similarly to \reffig{fig:r:number}, \reffig{fig:r:value} (center) shows that increasing $R$ reduces the percentage of admitted slices.

% \begin{figure}
% \subfigure[Objective function.\label{fig:r:value:utility}]{\includegraphics[width = 0.3\columnwidth]{figures/utility_value_R-eps-converted-to.pdf}} 
% \subfigure[Admission percentage.\label{fig:r:value:sat}]{\includegraphics[width = 0.3\columnwidth]{figures/sat_value_R-eps-converted-to.pdf}} 
% \subfigure[Complexity.\label{fig:r:value:iter}]{\includegraphics[width = 0.3\columnwidth]{figures/iter_value_R-eps-converted-to.pdf}} 
% \vspace{-0.3cm}
% \caption{Sl-EDGE performance when maximizing the profit of the NO.\textbf{[1)Make it bigger; 2) update ylabel = caption]}\label{fig:r:value}}
% \end{figure}
%
 \begin{figure}[!t]
    \centering
    \setlength\belowcaptionskip{-.3cm}
    \includegraphics[width=\columnwidth]{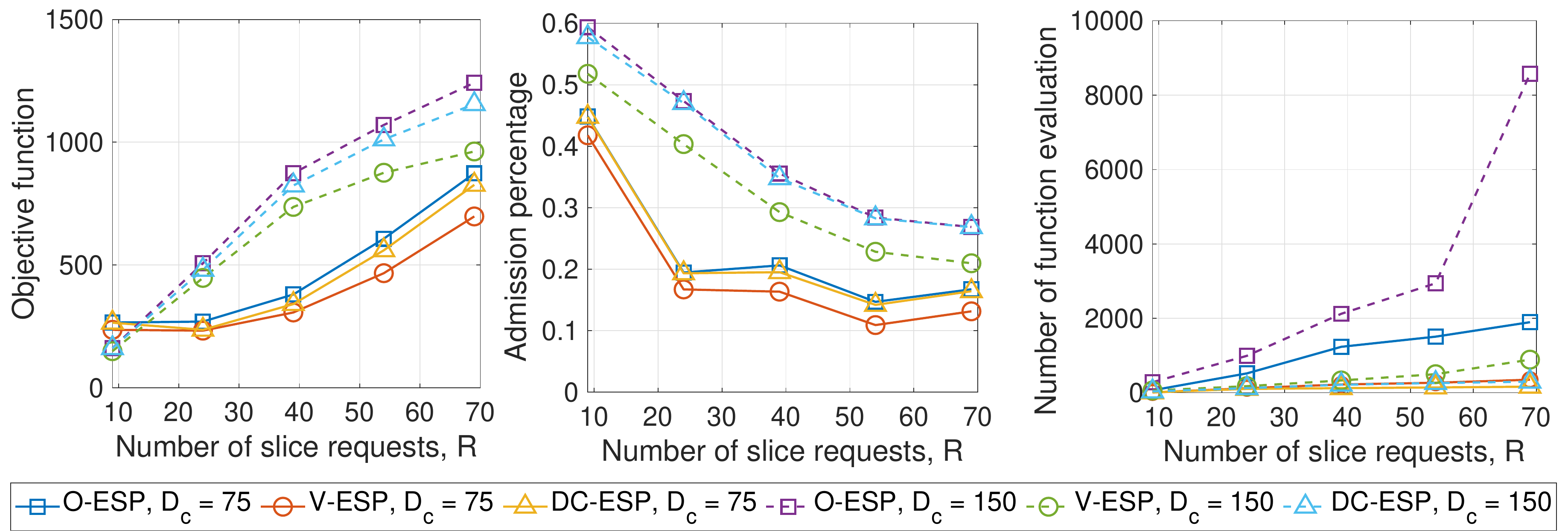}
    % \vspace{-0.7cm}
    \caption{\textit{Sl-EDGE} performance maximizing the profit of the IP. \label{fig:r:value}}
    % \vspace{-0.2cm}
 \end{figure}
\begin{figure}[t]
    \setlength\abovecaptionskip{-0.1cm}
    \setlength\belowcaptionskip{-0.4cm}
    \centering
    \includegraphics[width=0.8\columnwidth]{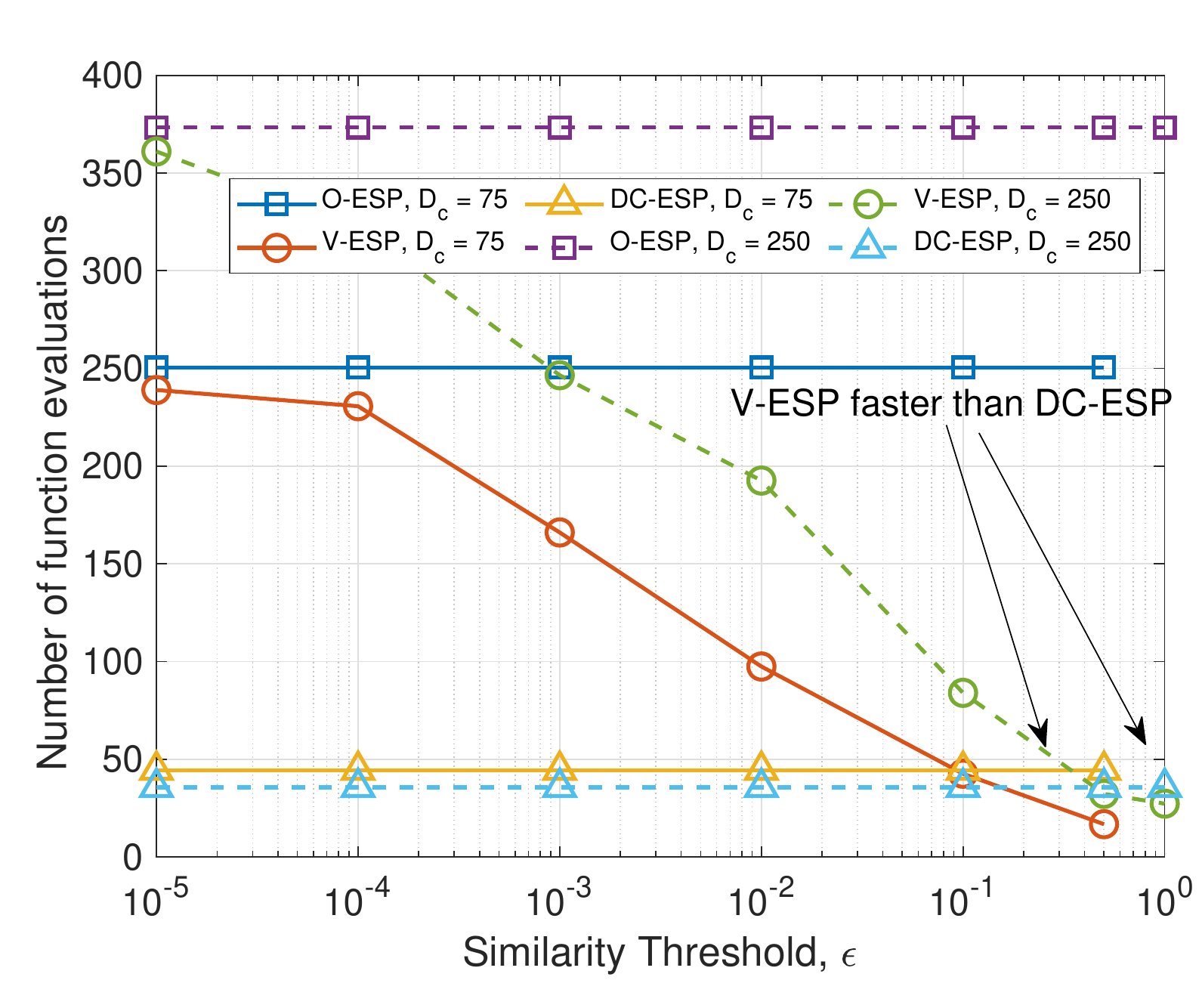}
    \caption{Computational complexity of the proposed algorithms as a function of the similarity parameter $\epsilon$.}
    \label{fig:epsilon:iter}
 \end{figure}
 
When compared to the problem described in Section~\ref{sec:num:number}, this profit maximization problem differs because (i) even if the number of edge nodes is small (\textit{i.e.,} $D_c = 75$), profit maximization produces profits that rapidly increase with $R$, and (ii) the percentage of admitted requests steeply decreases as $R$ increases. Indeed, the higher the number of requests, the higher the probability that slices with high value are submitted by tenants. In this case, \textit{Sl-EDGE} prioritizes more valuable requests at the expenses of others.

% \vspace{-0.1cm}
\subsection{Impact of $\epsilon$ on the V-ESP algorithm} \label{sec:numerical:epsilon}

Finally, we investigate the impact of different choices on the performance of the V-ESP algorithm. Recall that $\epsilon$ regulates the number of edge nodes that are aggregated into virtual edge nodes (Section \ref{sec:aggreagation}). The higher the value of $\epsilon$, the higher the percentage of edge nodes that are aggregated, and the smaller the number of virtual edge nodes generated by \textit{Sl-EDGE}.
%%%%%%%%%%%%%%%%%%%%%%%%%%
% \begin{figure}[!t]
%     \centering
%     \includegraphics[width=\columnwidth]{figures/utility_epsilon-eps-converted-to.pdf}
%     % \vspace{-0.8cm}
%     \caption{Utility.}
%     \vspace{-0.2cm}
%     \label{fig:epsilon:utility}
%  \end{figure}
%%%%%%%%%%%%%%%%%%%%%%%%%%%
% \begin{figure}[!t]
%     \centering
%     \includegraphics[width=\columnwidth]{figures/sat_epsilon-eps-converted-to.pdf}
%     % \vspace{-0.8cm}
%     \caption{Satisfaction ratio.}
%     \vspace{-0.2cm}
%     \label{fig:epsilon:sat}
%  \end{figure}
%%%%%%%%%%%%%%%%%%%%%%%%%%
 
 \reffig{fig:epsilon:iter} shows the computational complexity of V-ESP as a function of $\epsilon$ for different number $D_c$ of deployed edge nodes. As expected, $\epsilon$ does not impact neither O-ESP nor DC-ESP, however the impact on V-ESP is substantial. Indeed, larger values of $\epsilon$ reduce the number of physical edge nodes in the network, which are instead substituted by virtual edge nodes (one per aggregated group). This reduction eventually results in decreased computational complexity. Surprisingly, \reffig{fig:epsilon:iter} also shows that V-ESP enjoys an even lower computational complexity than that of the distributed DC-ESP when $\epsilon \approx 1$. Recall that V-ESP centrally determines an efficient slicing strategy over virtualized edge nodes, and these strategies are successively enforced by each cluster. This means that V-ESP can compute an efficient slicing policy as rapidly as DC-ESP while avoiding any coordination among different clusters. Overall, \reffig{fig:epsilon:iter} shows that V-ESP (green dashed line) computes a solution 7.5x faster than O-ESP (purple dashed line) when $\epsilon$ is high.

\begin{figure}[t]
    \setlength\belowcaptionskip{-.4cm}
    \centering
    \vspace{-0.5cm}
    \includegraphics[width=0.9\columnwidth]{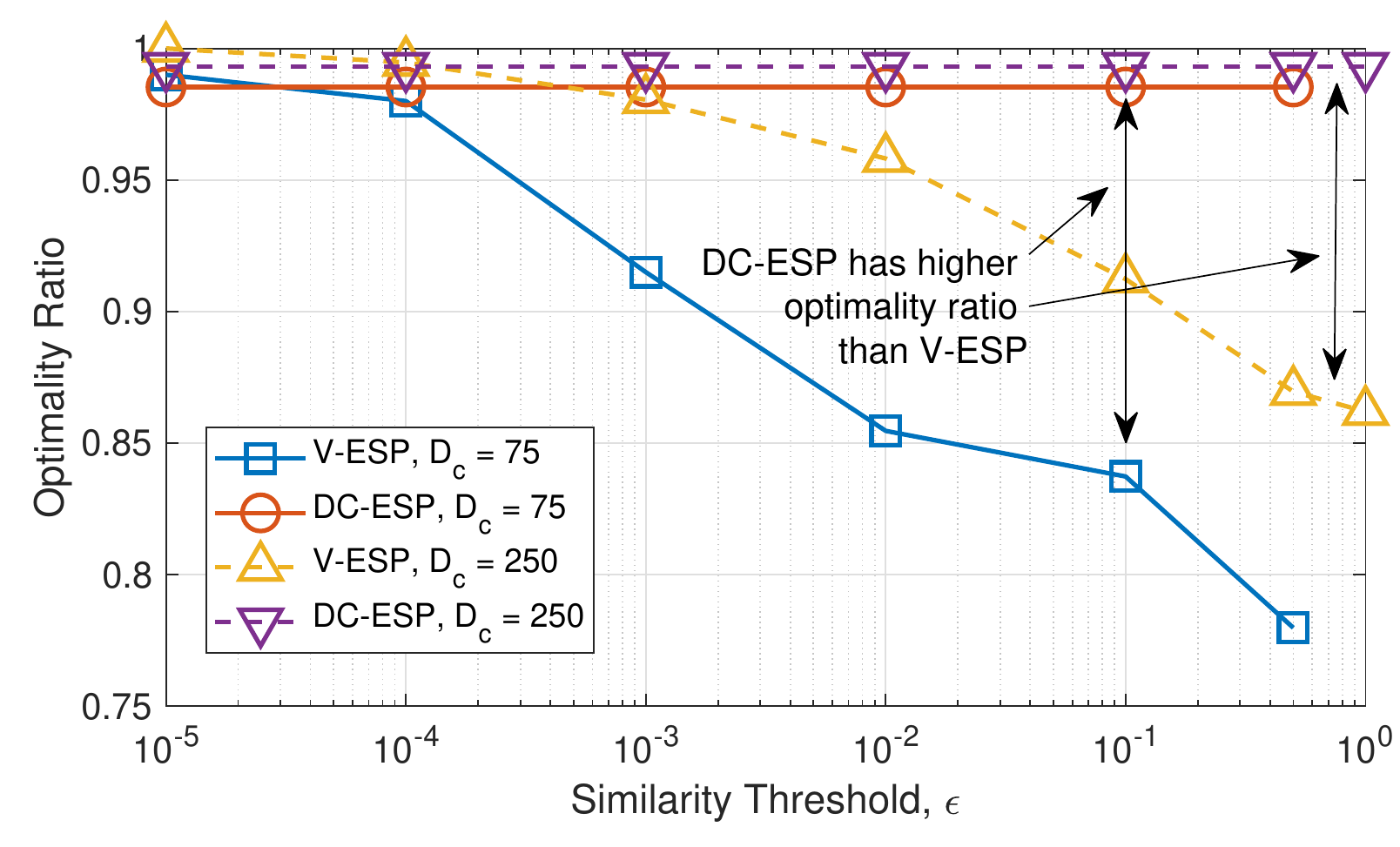}
    \caption{Optimality ratio of the algorithms proposed in Section \ref{sec:approximation} as a function of the similarity parameter $\epsilon$.}
    \label{fig:epsilon:gap}
 \end{figure}
 
 However, \reffig{fig:epsilon:gap} shows that reduced computation complexity comes at the expense of efficiency. 
 \begin{figure}[b]
    % \vspace{-.6cm}
    \centering
    \includegraphics[width=\columnwidth]{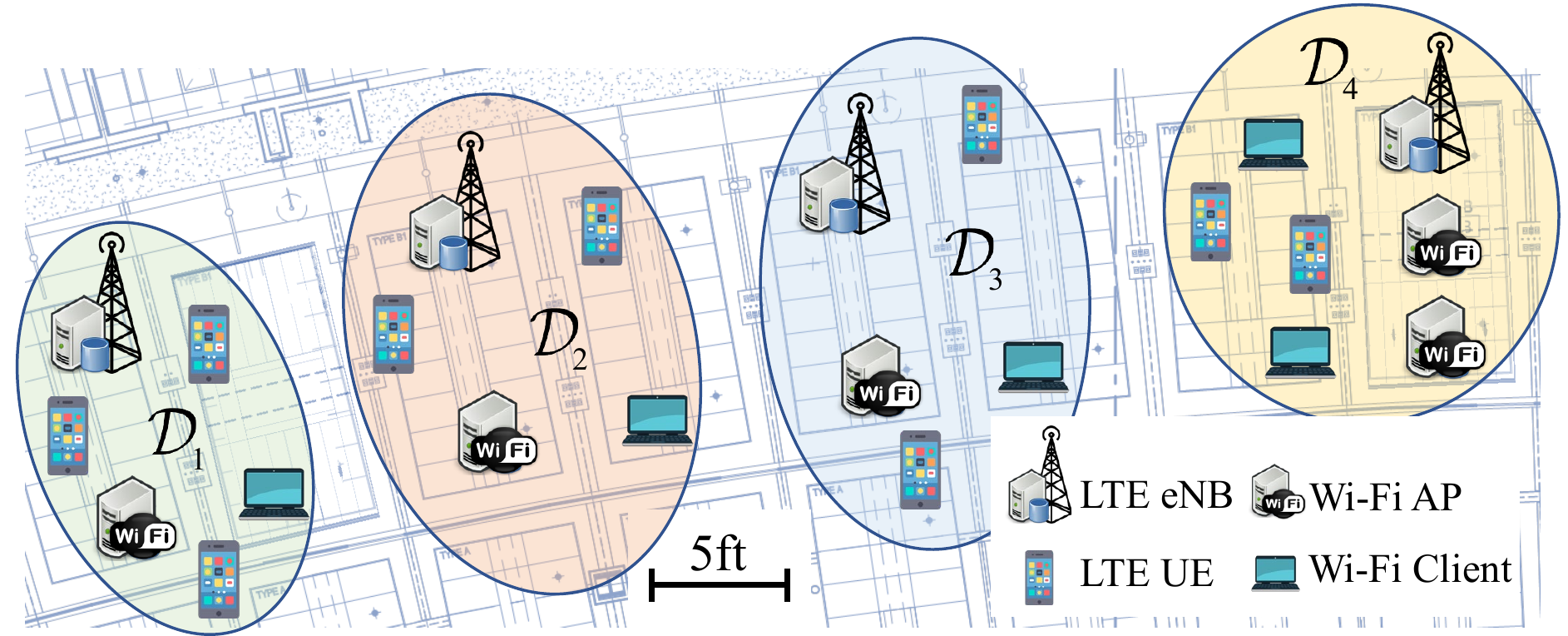}
    \caption{\textit{Sl-EDGE} testbed configuration.}
    \label{fig:testbed}
\end{figure}
 Indeed, the optimality ratio (\textit{i.e.}, the distance of the output of any approximation algorithm from the optimal solution of the problem) decreases as $\epsilon$ increases up to a maximum of 25\% loss with respect to the optimal. Although the optimality ratio for $\epsilon = 0.1$ is high (\textit{i.e.,} $92\%$ and $84\%$ for $D_c = 75$ and $D_c=250$, respectively), clearly a trade-off between computational complexity and efficacy should be considered. 

% \vspace{-0.2cm}
\section{S\lowercase{l}-EDGE prototype} \label{sec:experimental}

We prototyped \textit{Sl-EDGE} on \textit{Arena}, a large-scale 64-antennas SDR testbed~\cite{bertizzolo2019arena}. A server rack composed of 12 Dell PowerEdge R340 servers is used to control the testbed SDRs, and to perform baseband signal processing as well as generic computation and storage operations.
The servers connect to a radio rack formed of 24 Ettus Research SDRs (16 USRPs N210 and 8 USRPs X310) through $10\:\mathrm{Gbit/s}$ optical fiber cables to enable low-latency and reactive communication with the radios.
These are connected to 64 omnidirectional antennas through $100\:\mathrm{ft}$ coaxial cables.
Antennas are hung off the ceiling of a $2240\:\mathrm{ft^2}$ office space and operate in the 2.4-2.5 and 4.9-5.9~GHz frequency bands.
The USRPs in the radio rack achieve symbol-level synchronization through four OctoClock clock distributors.

We leveraged 14 USRPs of the above-mentioned testbed (10 USRPs N210 and 4 USRPs X310) to prototype \textit{Sl-EDGE}. In our testbed, an edge node consists of one USRP and one server, the former provides networking capabilities, while the latter provides storage and computing resources. 
% The specific USRPs in use were changed during the various experiments to fully leverage different locations of the clusters antennas. 
The testbed configuration adopted to prototype and evaluate \textit{Sl-EDGE} performance is shown in \reffig{fig:testbed}.

Since there are no open-source experimental 5G implementations yet, we used the LTE-compliant \textit{srsLTE} software~\cite{gomez2016srslte} to implement LTE \textit{networking} slices. Since LTE and NR resource block grids are  similar, we are confident that our findings remain valid for 5G scenarios. Specifically, srsLTE offers a standard-compliant implementation of the LTE protocol stack, including Evolved Packet Core and LTE base station (eNB) applications. We leveraged srsLTE to instantiate 4 eNBs on USRPs X310, while we employed 9 COTS cellular phones (Samsung Galaxy S5) as users. Each user downloads a data file from one of the rack servers, which are used as caching nodes with \textit{storage} capabilities.

We consider three tenants demanding an equal number of LTE network slices (\textit{i.e.,} $\mbox{LS}_1$, $\mbox{LS}_2$ and $\mbox{LS}_3$) at times $t_0 = 0\:\mathrm{s}$, $t_1 = 40\:\mathrm{s}$, and $t_2 = 80\:\mathrm{s}$. Each tenant controls a single slice only and serves a set of UEs located in different clusters as shown in \reftable{tab:lte:slice1} (right).

\begin{table}[t]
\centering
\caption{Per-cluster admitted RBs in $\mbox{LS}_1$, and UE association.}
\label{tab:lte:slice1}
\small
\setlength{\tabcolsep}{3.5pt}
\renewcommand{\arraystretch}{1.2}
\begin{tabular}[]{c|c|c|c|c|c|c|c|}%{p{0.3cm}|p{1.2cm}|p{0.8cm}|c|c|c|c|c|}
\cline{2-4}\cline{6-8}
 & $t_0\!=\![0, 40]\mathrm{s}$ & $t_1\!=\![40, 80]\mathrm{s}$ & $t_2\!=\![80, 160]\mathrm{s}$ & & $\mbox{LS}_1$   & $\mbox{LS}_2$   & $\mbox{LS}_3$ \\
\cline{1-4}\cline{6-8}
\multicolumn{1}{|c|}{$\mathcal{D}_1$} & 24      & 0     &  0 & &  $\mbox{UE}_1$ & $\mbox{UE}_2$ & $\mbox{UE}_7$\\
\cline{1-4}\cline{6-8}
\multicolumn{1}{|c|}{$\mathcal{D}_2$} & 0       & 0     & 0 & & - & $\mbox{UE}_3$ & $\mbox{UE}_8$ \\
\cline{1-4}\cline{6-8}
\multicolumn{1}{|c|}{$\mathcal{D}_3$} & 24      & 24    & 0 & & $\mbox{UE}_4$ & $\mbox{UE}_5$ & $\mbox{UE}_9$ \\
\cline{1-4}\cline{6-8}
\multicolumn{1}{|c|}{$\mathcal{D}_4$} & 42      & 24    & 0 & & $\mbox{UE}_6$ & - & - \\
\cline{1-4}\cline{6-8}
\end{tabular}
% \vspace{-0.5cm}
\end{table}

To test \textit{Sl-EDGE} abilities in handling slices involving both \textit{networking and computation capabilities}, we also implemented a video streaming slice where edge nodes stream a video file stored on an Apache instance through the \textit{dash.js} reference player~\cite{dashjs} running on the Chrome web browser. DASH allows real-time adaptation of the video bitrate, according to the client requests and the available resources~\cite{dash}. Each streaming video was sent to the receiving server of the rack through USRPs N210 acting as SDR-based WiFi transceivers (WiFi Access Points~(APs) and Clients in \reffig{fig:testbed}), using the GNU Radio-based IEEE~802.11a/g/p  implementation~\cite{bloessl2018performance}.
In the meanwhile, the edge node performed transcoding of video chunks using \textit{ffmpeg}~\cite{ffmpeg}. We point out that each SDR can play multiple roles in the cluster (\textit{e.g.,} USRPs X310 can act as WiFi transceiver/LTE eNB), and the actual role is determined at run-time based on the slice types allocated to each tenant.

% there are limited resources (3 wifi channels overall, 50 RB per cluster, 6 CPUs for each of the 5 wifi stations to perform transcoding)
% 3 tenant require slice for LTE
% 5 tenant slice video streaming (just 3 are admitted)
% 5 tenants require 
A demonstration of the operations of \textit{Sl-EDGE} in the scenario of~\reffig{fig:testbed} is shown in Figs.~\ref{fig:exp:lte},~\ref{fig:exp:videoBR} and~\ref{fig:exp:cpu}. Overall, the prototype of \textit{Sl-EDGE} allocates and supports 11 heterogeneous slices simultaneously: 3 for cellular connectivity, 3 for video streaming over WiFi, and 5 for computation with the \textit{ffmpeg} transcoding. Bitrate results for the LTE slices and individual UEs are reported in \reffig{fig:exp:lte}, where we show that \textit{Sl-EDGE} provides an overall instantaneous throughput of $37~\mathrm{Mbit/s}$. 

\begin{figure}[h]
     \centering
     \setlength\belowcaptionskip{-.2cm}
     \subfloat{\includegraphics[width=0.33\columnwidth]{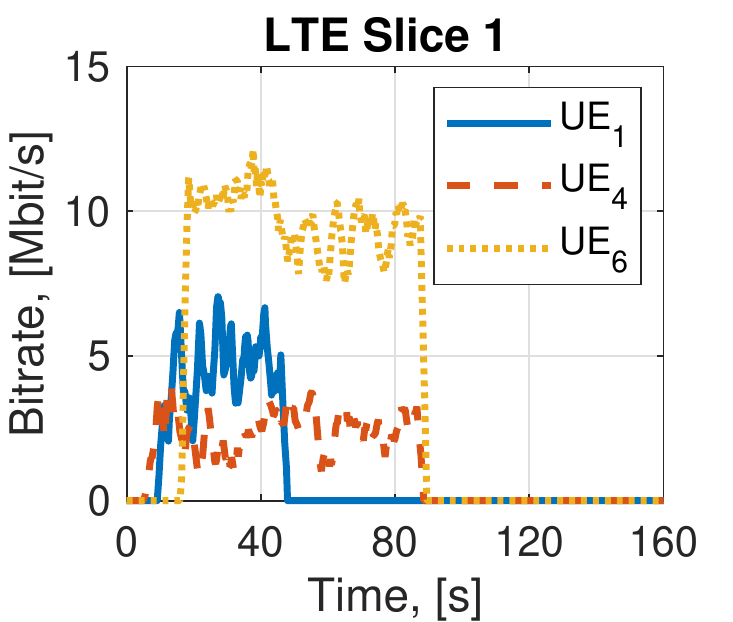}\label{fig:exp:s1}}
     \subfloat{\includegraphics[width=0.33\columnwidth]{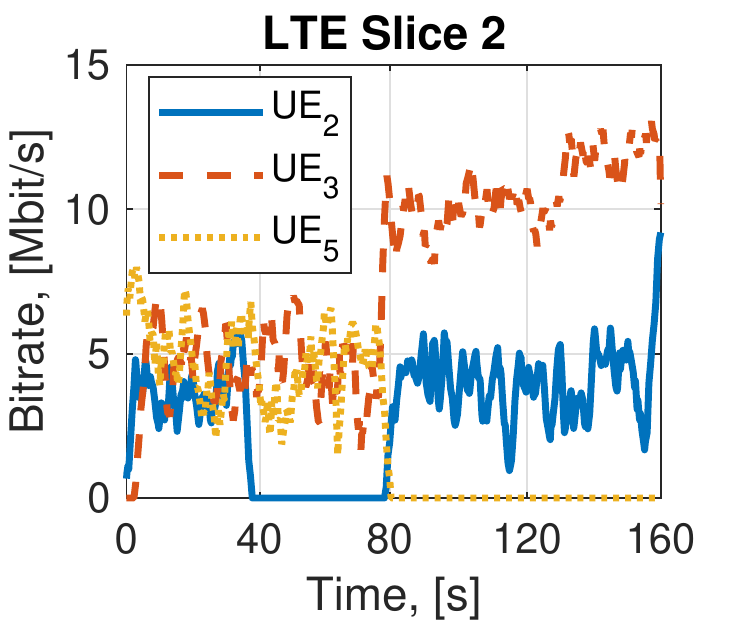}\label{fig:exp:s2}}
     \subfloat{\includegraphics[width=0.33\columnwidth]{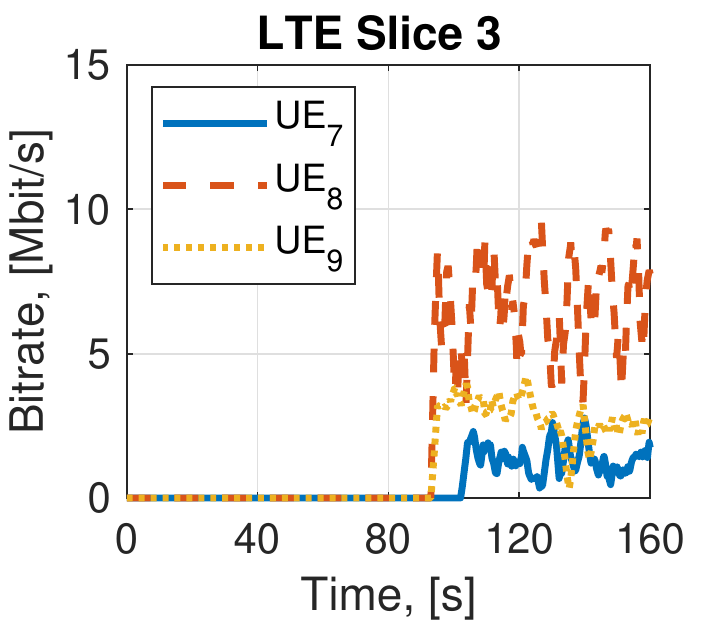}\label{fig:exp:s3}}
     \vspace{-0.2cm}
     \caption{Instantiation of LTE network slices.}
     \label{fig:exp:lte}
\end{figure}

It is worth to point out that the throughput of each LTE slice, and thus each UE, varies according to the amount of resources allocated to the tenants. An example is shown in \reftable{tab:lte:slice1} (left), where we report the output of \textit{Sl-EDGE} O-ESP algorithm (\textit{i.e.}, the amount of RBs allocated to LTE Slice~1 ($\mbox{LS}_1$)) in each cluster. 
Such an allocation impacts the throughput of UEs attached to slice $\mbox{LS}_1$. As an example, in \reffig{fig:exp:lte} we notice that $\mathrm{UE}_6 \in \devs_4$ is allocated 42~RBs at $t_0$, 24~RBs in $t_1$, and 0~RBs in $t_2$ and approximately achieves a throughput of $12\mathrm{Mbit/s}$, $8\mathrm{Mbit/s}$ and $0\mathrm{Mbit/s}$, respectively.

The video streaming application from~\reffig{fig:exp:videoBR} involves 5 tenants that share 3 non-overlapping channels, allocated in any cluster $\devs_i, i \in \{1, 2, 3, 4\}$. To avoid co-channel interference, \textit{Sl-EDGE} only admits 3 flows at any given time. As Fig.~\ref{fig:exp:videoBR} shows, during the first 70 seconds of the experiment only the slices for tenants 1, 2 and 3 are admitted, while tenant 4 needs to wait for tenant 3 to stop the video streaming before being granted a slice. Similarly, the slice for tenant 5 starts at time $t = 140\:\mathrm{s}$, when the flow of tenant 1 stops. Meanwhile, the tenants submit requests for computation slices to transcode the videos with ffmpeg, which compete with srsLTE and GNU Radio slices necessary for LTE connectivity and video streaming in the 3 LTE eNBs and 5 WiFi APs of the 4 clusters. 
Moreover. in each server one of the 6 cores is reserved to the operating system exclusively and is never allocated to tenants.  Fig.~\ref{fig:exp:cpu} shows that \textit{Sl-EDGE} limits the total CPU utilization to 83\%, which demonstrates that \textit{Sl-EDGE} avoids over-provisioning of available resources.
 
\begin{figure}[t!]
    \centering
    \setlength\belowcaptionskip{-.2cm}
    \includegraphics[width=\columnwidth]{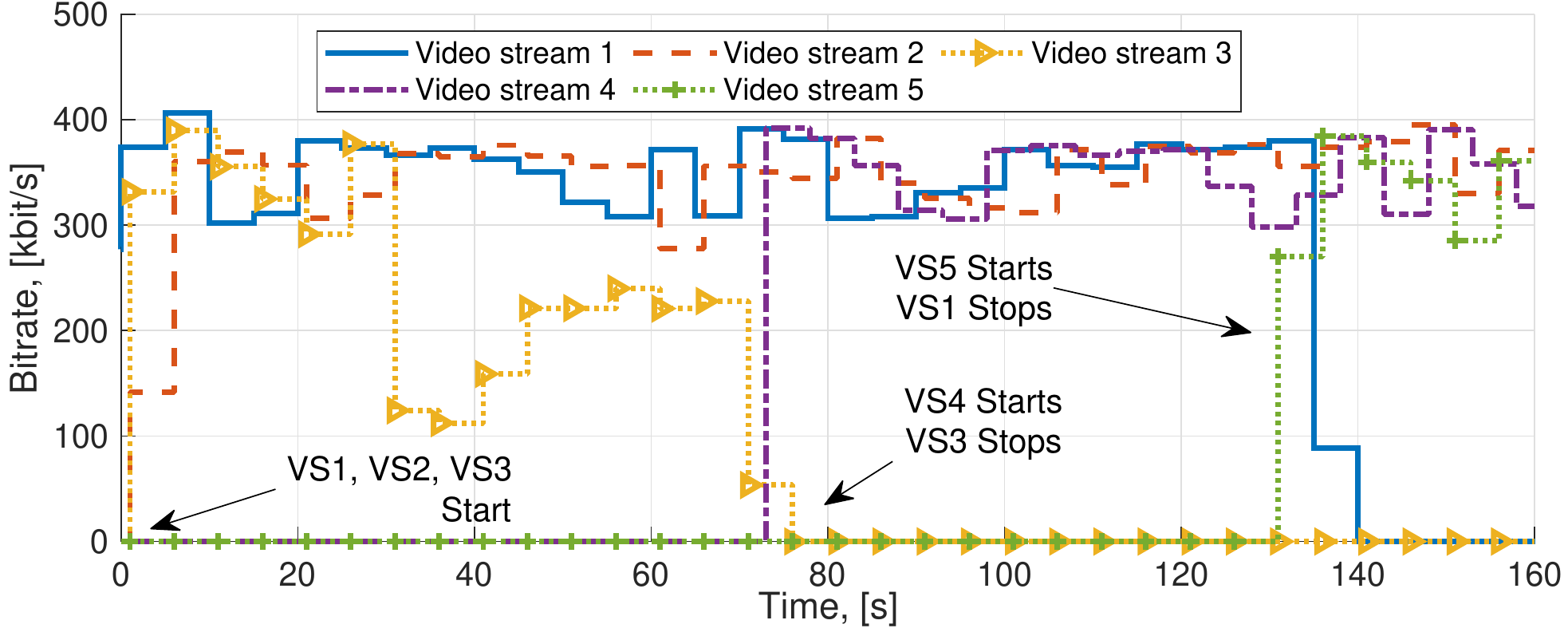}
    \vspace{-0.4cm}
    \caption{Dynamic instantiation of video streaming slices.}
    \label{fig:exp:videoBR}
\end{figure}

\begin{figure}[t!]
    \centering
    \setlength\belowcaptionskip{-.2cm}
    \includegraphics[width=\columnwidth]{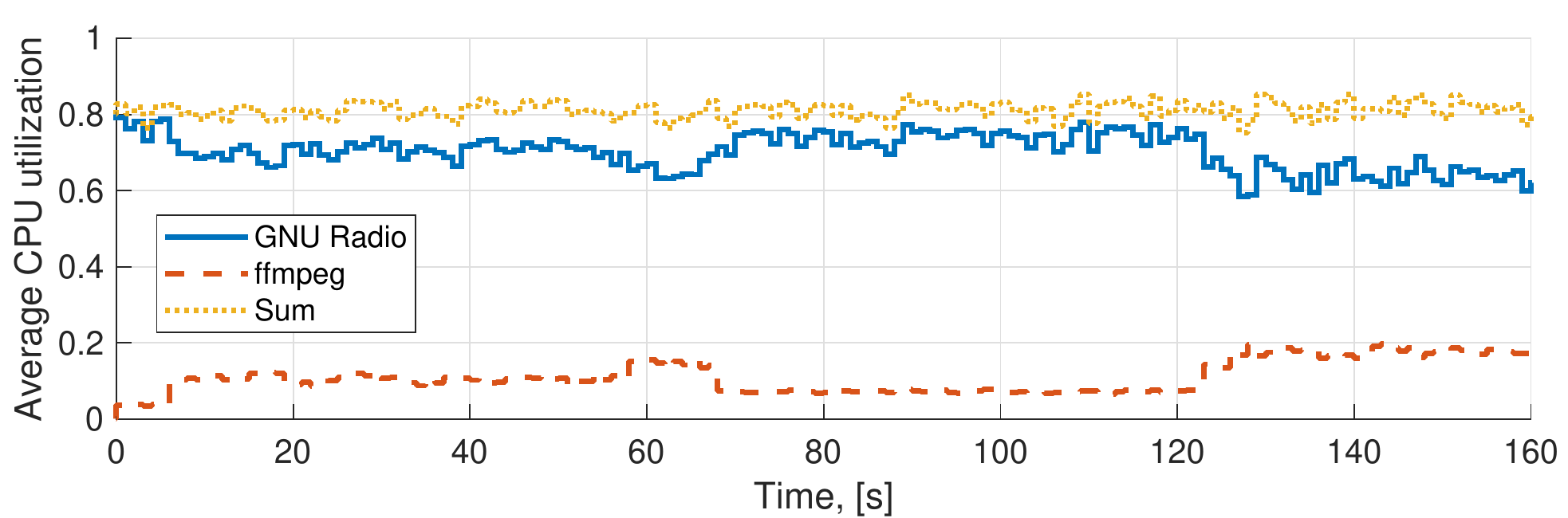}    
    \vspace{-0.4cm}
    \caption{CPU utilization for networking and transcoding services. }
    \label{fig:exp:cpu}
\end{figure}

% \vspace{-0.1cm}
\section{Conclusions} \label{sec:conclusions}
% \vspace{-0.1cm}
In this paper, we have presented \textit{Sl-EDGE}, a unified MEC slicing framework that instantiates heterogeneous slice services (\textit{e.g.,} video streaming, caching, 5G network access) on edge devices. We have first shown that the problem of optimally instantiating joint network-MEC slices is NP-hard. Then, we have proposed distributed algorithms that leverage similarities among edge nodes and resource virtualization to instantiate heterogeneous slices faster and within a short distance from the optimum. We have assessed the performance of our algorithms through extensive numerical analysis and on a 64-antenna testbed with~24 software-defined radios. Results have shown that \textit{Sl-EDGE} instantiates slices 6x more efficiently then state-of-the-art MEC slicing algorithms, and that \textit{Sl-EDGE} provides at once highly-efficient slicing of joint LTE connectivity, video streaming over WiFi, and ffmpeg video transcoding.

\begin{acks}
This work was supported in part by ONR under Grants N00014-19-1-2409 and N00014-20-1-2132, and by NSF under Grant CNS-1618727.
\end{acks}

% \hspace{-0.2cm}
% \footnotesize
\balance
\bibliographystyle{acm}
\bibliography{bibl} 

\end{document}